\documentclass{article}

\usepackage{microtype}
\usepackage{graphicx}
\usepackage{subcaption}
\usepackage{booktabs} 
\usepackage{color,xcolor}
\definecolor{darkred}{rgb}{0.6,0,0}  
\usepackage{hyperref}

\usepackage[accepted]{icml2026}

\usepackage{amsmath}
\usepackage{amssymb}
\usepackage{mathtools}
\usepackage{amsthm}
\usepackage{multirow} 

\usepackage[capitalize,noabbrev]{cleveref}

\theoremstyle{plain}
\newtheorem{theorem}{Theorem}[section]
\newtheorem{proposition}[theorem]{Proposition}

\theoremstyle{definition}

\theoremstyle{remark}

\usepackage[textsize=tiny]{todonotes}

\icmltitlerunning{Causal Direct Preference Optimization for Distributionally Robust Generative Recommendation}

\begin{document}

\twocolumn[
  \icmltitle{Causal Direct Preference Optimization for \\ Distributionally Robust Generative Recommendation}



  \icmlsetsymbol{equal}{*}

  \begin{icmlauthorlist}
    \icmlauthor{Chu Zhao}{yyy}
    \icmlauthor{Enneng Yang}{comp}
    \icmlauthor{Jianzhe Zhao}{yyy}
    \icmlauthor{Guibing Guo}{yyy}
  \end{icmlauthorlist}

  \icmlaffiliation{yyy}{Northeastern University, Shenyang, China}
  \icmlaffiliation{comp}{Shenzhen Campus of Sun Yat-sen University, China}
  \icmlcorrespondingauthor{Guibing Guo and Jianzhe Zhao}{guogb, zhaojz@swc.neu.edu.cn}
  \icmlkeywords{Recommender Systems, Generative Recommendation }

  \vskip 0.3in
]



\printAffiliationsAndNotice{\icmlEqualContribution}

\begin{abstract}
 Direct Preference Optimization (DPO) guides large language models (LLMs) to generate recommendations aligned with user historical behavior distributions by minimizing preference alignment loss. However, our systematic empirical research and theoretical analysis reveal that DPO tends to amplify spurious correlations caused by environmental confounders during the alignment process, significantly undermining the generalization capability of LLM-based generative recommendation methods in out-of-distribution (OOD) scenarios. To mitigate this issue, we propose CausalDPO, an extension of DPO that incorporates a causal invariance learning mechanism. This method introduces a backdoor adjustment strategy during the preference alignment phase to eliminate interference from environmental confounders, explicitly models the latent environmental distribution using a soft clustering approach, and enhances robust consistency across diverse environments through invariance constraints. Theoretical analysis demonstrates that CausalDPO can effectively capture users' stable preference structures across multiple environments, thereby improving the OOD generalization performance of LLM-based recommendation models. We conduct extensive experiments under four representative distribution shift settings to validate the effectiveness of CausalDPO, achieving an average performance improvement of 17.17\% across four evaluation metrics. Our implementation code is available at \url{https://github.com/user683/CausalDPO}.
\end{abstract}

\section{Introduction}
Recent advances in large language models (LLMs) have shown strong performance across diverse tasks \cite{chang2024survey,liang2024survey}, motivating their adoption in recommender systems \cite{liang2024survey,wu2024graph}. Existing LLM-based recommenders typically (i) use LLMs to generate or refine user/item representations and embeddings \cite{zhao2025can,ren2024representation}, (ii) prompt LLMs with user histories or collaborative signals for next-item prediction \cite{zhang2025collm,zhu2024collaborative}, or (iii) distill LLM knowledge into lightweight models for efficient serving \cite{cui2024distillation,li2023prompt}. These methods improve fine-grained preference modeling and enable reasoning-aware recommendation.
To support generation-based recommendation, many works inject domain knowledge via supervised fine-tuning (SFT) \cite{dong2023abilities}. More recently, Direct Preference Optimization (DPO) \cite{rafailov2023direct} further aligns LLM outputs with user preferences by training on offline triples <context, positive item, negative item>, encouraging the model to learn preference orderings and produce more personalized recommendations.

\begin{figure*}[t]
    \centering
    \includegraphics[width=1\textwidth]{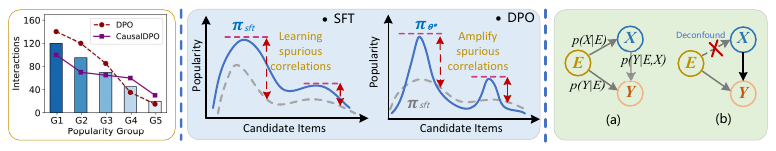}
    \vspace{-0.6cm}
    \caption{\textbf{Left}: This figure presents the number of interactions (frequency) for DPO-based models across item popularity groups, with popularity decreasing from G1 (head) to G5 (tail).
\textbf{Middle}: This figure illustrates how LLM-based recommendation models can learn and amplify spurious correlations during preference alignment, and how the DPO mechanism further reinforces these spurious correlations.
\textbf{Right}: Using a Structural Causal Model (SCM), this figure analyzes how environmental confounders $E$ affect the model and demonstrates how their influence can be mitigated via the backdoor adjustment criterion.
}
\vspace{-0.3cm}
    \label{fig: Case}
\end{figure*}

While these methods have achieved promising results, our empirical studies and theoretical analyses reveal critical limitations. Data used during SFT often contains environment-specific confounders, which can induce LLMs to learn spurious correlations dependent on such factors. Worse still, the DPO objective tends to amplify these spurious dependencies during the preference alignment process, thereby hindering out-of-distribution (OOD) generalization. In this paper, \textbf{environmental confounders refer to unobserved factors introduced by specific contexts or external conditions that underlie the training data}. Such factors can induce spurious correlations that deviate from the true causal mechanism, thereby weakening generalization on OOD datasets. For example, during COVID-19 lockdowns, demand for medical, fitness, and entertainment products rose simultaneously; A model might then mistakenly associate users’ preferences for fitness or electronic goods with medical supplies. In this setting, COVID-19 is a prototypical environmental confounder. Other common sources include policy changes, social events or campaigns, platform popularity bias, and temporal/seasonal drift.  As shown in Figure~\ref{fig: Case} (left), we observe that after DPO training, the model significantly increases the interaction counts for high-popularity groups (e.g., G1--G2), while further reducing the interaction counts for long-tail groups (e.g., G4--G5). This amplifies the bias induced by popularity as a confounding factor, making the model more prone to rely on environment-driven spurious correlations rather than genuine preference signals.

Recent works \cite{jiang2024item,bao2024decoding,gao2024sprec,sakib2024challenging,dai2024bias} have explored mitigation strategies for LLM-based recommendation under distribution shifts. For example, RW \cite{jiang2024item} adopts a three-stage approach combining de-biased sampling, fairness-based reweighting, and re-ranking; $D^3$ \cite{bao2024decoding} addresses score bias and content homogeneity via ghost token normalization and low-frequency token promotion; and SPRec \cite{gao2024sprec} employs iterative adversarial training to suppress generation biases. However, these methods often target specific distribution shifts, while real-world recommendation data typically involves multiple, interrelated complex shifts, highlighting the need for a unified and flexible approach that generalizes robustly across various real-world distribution changes and scenarios.

In this work, we propose CausalDPO, a novel extension of DPO that incorporates causal invariance constraints to mitigate the impact of environmental confounders and reduce reliance on spurious correlations, thereby improving cross-environment generalization. We first construct a causal structural model to analyze how LLMs, during the SFT stage, can internalize spurious patterns caused by latent environment factors. We further show both empirically and theoretically that the DPO process exacerbates this issue by reinforcing such spurious correlations. To address this, we introduce a backdoor adjustment strategy that leverages soft clustering to group samples, implicitly modeling environmental factors without requiring explicit environment labels. We then impose a cross-group invariance regularization within the DPO objective to encourage consistent preference modeling across inferred environments. This fosters robust, causally invariant preference learning.

\textbf{Our contributions can be summarized as follows}:
\textbf{\textcolor{darkred}{(1)}} We conduct the empirical study and theoretical analyses to demonstrate the risk of spurious correlation amplification in LLM-based recommendation systems, caused by environmental confounders during DPO-based preference alignment.
\textbf{\textcolor{darkred}{(2)}} We introduce \textbf{CausalDPO}, a causality-aware enhancement of DPO that leverages soft clustering and backdoor adjustment, together with invariance regularization, to implicitly model and counteract environmental confounding effects.
\textbf{\textcolor{darkred}{(3)}} Extensive experiments across four representative distribution shift scenarios demonstrate that CausalDPO significantly enhances the generalization ability of LLM-based generative recommenders under complex environments.

\section{Preliminary}
This section introduces two dominant LLM-based generative recommendation paradigms: \textit{discriminative} and \textit{direct generation}, and outlines their core optimization pipelines, particularly direct preference optimization.

\textbf{Task Formulation}.  
Given a user’s interaction history $\mathcal{S}_u = \{i_h\}_{h=1}^N$, an LLM-based recommender $\mathcal{M}_\theta$ operates under one of two paradigms: 
(1) \textit{Discriminative}: Select the most relevant item $i_p$ from a candidate set $\mathcal{C}$;  
(2) \textit{Generative}: Directly generate a target item $i_g$ conditioned on a prompt $\mathbf{p}$, typically constructed as $\mathbf{p} = \text{Concat}[\mathcal{S}_u, \mathcal{P}_u, \mathcal{P}_i]$. Where $\mathcal{P}_u$ is the user profile and $\mathcal{P}_i$ is the item profile.
Both paradigms are unified as:  $
i^* = \arg\max_{i \in \mathcal{C} \cup \mathcal{V}} P_{\mathcal{M}_{\theta}}(i|\mathbf{p}),
$
where $\mathcal{V}$ denotes the model’s vocabulary space.

\textbf{Direct Preference Optimization (DPO)}.  
To align LLM outputs with user preferences without explicit reward modeling, the DPO objective is formulated as follows:
\begin{equation}
\begin{aligned}
   & \mathcal{L}_{\text{DPO}}(\pi_\theta; \pi_{\text{ref}})  =   -\mathbb{E}_{(x,y_w,y_l)\sim \mathcal{O}} \\  & \left[ \log \sigma \left( \beta \log \frac{\pi_\theta(y_w|x)}{\pi_{\text{ref}}(y_w|x)} - \beta \log \frac{\pi_\theta(y_l|x)}{\pi_{\text{ref}}(y_l|x)} \right) \right], 
\end{aligned}
\label{eq:dpo}
\end{equation}  
where $y_w$ and $y_l$ are preference-aligned positive and negative samples (e.g., clicked v.s. non-clicked items), $\pi_{\text{ref}}$ is a frozen reference model, and $\beta$ controls policy deviation. DPO enables efficient, contrastive alignment of generative recommendations with user behavior. Our empirical study reveals that using DPO to align with human preferences can amplify the influence of confounders and degrade performance. To address this, we propose a Causal Direct Preference Optimization method, which is a variant of DPO.

\section{Methodology}
This section first analyses how Direct Preference Optimization (DPO) tends to amplify spurious correlations caused by environmental confounders during the preference alignment process. We formulate a causal intervention objective to mitigate this issue and introduce a practical implementation based on soft clustering and invariance regularization. Through rigorous theoretical proof, we demonstrate that optimizing CausalDPO can encourage LLM-based methods to learn stable user preferences across environments, further enhancing the recommendation capability out-of-distribution.

\subsection{Motivation: Spurious Correlation Amplification in DPO}
As discussed in Section 2, LLM-based generative recommendation models are typically adapted to downstream tasks via SFT and DPO. However, data used during SFT and DPO often contains implicit environment-specific confounders (e.g., time, popularity, exposure bias). These confounders influence both the input features and the preference labels, inducing a spurious dependency structure.
To better illustrate this issue, we further construct a Structural Causal Model (SCM) to characterize the relationship between the input text $x$ and the predicted target label $y_i$, aiming to uncover the underlying dependency structure and potential sources of bias during training. 
The dependencies among the edges in Figure \ref{fig: Case}, right part (a), are described as follows:
(1) $E \rightarrow X$: This dependency indicates that the latent environment variable $E$ affects the distribution of input data $X$, i.e., $p(X|E)$, reflecting the environmental influence on data generation. (2)
$X \rightarrow Y$: This represents the model’s predictive process $p(Y|X)$, where the relationship between $X$ and $Y$ becomes deterministic when the model parameters $\theta$ are fixed. (3) $E \rightarrow Y$: This dependency captures the indirect influence of environment $E$ on the model's output $Y$ through the supervised fine-tuning process of LLMs, as $E$ affects how the model is fine-tuning.
Since the environment $E$ affects the data distribution through the joint probability $p(X, Y | E) = p(X | E) p(Y | X, E)$. We further have: 
\begin{equation}
\begin{split}
\theta^* = & \arg\min_{\theta} \mathbb{E}_{e \sim p_{\text{tr}}(E), (X,Y) \sim p(X,Y|E=e)}\\ & \left[ -\log \mathcal{M}_{\theta}(Y|X) \right],
\label{eq:sft_v2}
\end{split}
\end{equation}
where $p_{tr}$ denotes the training data distribution.
From Figure \ref{fig: Case} right part (a) and the above analysis, we can see that when directly optimizing the likelihood function, a statistical dependency exists between the environment  $E$ and the label $Y$ (i.e., $p(Y|E) \neq p(Y)$). As a result, the model implicitly learns a spurious causal path from $E \rightarrow Y$ during loss minimization—capturing correlations induced by specific environments $e$ in the training data. This spurious correlation is highly sensitive to distributional changes. In particular, when the test environment $E' \neq E$, the spurious correlation $p_{\text{tr}}(Y|E)$ learned during training no longer aligns with the true distribution $p_{\text{ts}}(Y|E')$, causing the model to rely on environment-specific noisy features that become invalid.
Under distribution shift, the performance of the model $\mathcal{M}_\theta$ trained in Eq. (\ref{eq:sft_v2}) can be formalized as:
\begin{equation}
   \text{Error}_{\text{ts}} \approx \mathbb{E}_{E'} \left[ \text{KL}\left( p(Y|X,E') \| \mathcal{M}_\theta(Y|X) \right) \right],
\end{equation}
where KL($\cdot$) denotes the Kullback–Leibler divergence.
When $\mathcal{M}_\theta(Y|X)$ overly depends on non-causal features associated with the environment $E$, the KL divergence increases significantly, severely impairing the model’s generalization performance in unseen environments.
Building on the above analysis, we directly present the following proposition (the proof provided in Appendix \ref{ap: Proposition 1}).

\begin{proposition}[DPO amplifies spurious correlations and hinders generalization capabilities]\label{prop:dpo-spurious}
In the context of LLMs, when the environmental confounder $E$ in the training data satisfies the following preference bias:
\begin{equation}
 p_{\text{train}}(E \mid y_w) > p_{\text{train}}(E \mid y_l),
\end{equation}
the environmental feature $E $ is more likely to appear in the preferred outputs $ y_w $, DPO  tends to learn spurious correlations between $E$ and the preferred output $y_w$.
During the maximum likelihood optimization of the policy $\pi_\theta$ on preference pairs, such spurious preference signals are further amplified as the weight $ w_E $ associated with the environmental feature function $ f_E(y, E) $ increases iteratively during training.
This yields a generalization bound under distribution shift, showing that the generalization error is upper-bounded by the mismatch between the training and test conditional distributions of $E$:
\begin{equation} \mathrm{GenErr}(\theta) \leq 2C |w_E| \cdot \mathbb{E}_{p_{\text{test}}(x)} \left[ \mathrm{TV}(p_{\text{train}}(E|x), p_{\text{test}}(E|x)) \right]. 
\end{equation}
Where TV denotes the Total Variation Distance and C is the constant.  It is easier to judge $ y_w \succ y_l $ under a specific environmental feature $ E = e$, then the learned policy $\pi_\theta$ will over-rely on the environmental feature $ E $ for preference distinction.
This leads to significant performance degradation in the test environment when the distribution of $ E $ is no longer consistent.
\end{proposition}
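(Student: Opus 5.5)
The plan is to make the statement precise through a log-linear (energy-based) parametrization of the policy, prove the two qualitative claims (spurious weight grows, and grows monotonically) by a gradient argument, and then derive the bound from a Lipschitz/total-variation estimate. Concretely, I will assume
\[
\pi_\theta(y\mid x)\propto \pi_{\text{ref}}(y\mid x)\exp\big(w_C^\top f_C(x,y)+w_E\, f_E(y,E)\big),
\]
where $f_C$ collects the causal (environment-invariant) features and $f_E$ is the environmental feature. I will also assume $|f_E|\le C$. Under this model the DPO implicit reward difference is
\[
\beta\big(w_C^\top\Delta f_C + w_E\,\Delta f_E\big), \qquad \Delta f = f(y_w)-f(y_l).
\]
The loss in Eq.~(\ref{eq:dpo}) is then a logistic regression on these feature differences.

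\textbf{Step 1: the spurious weight receives a positive gradient.} I will compute the gradient of the loss with respect to $w_E$:
\[
\partial_{w_E}\mathcal{L}_{\text{DPO}}=-\beta\,\mathbb{E}\big[\sigma(-r_\theta)\,\Delta f_E\big].
\]
At initialization ($w_E=0$) the weight $\sigma(-r_\theta)$ does not depend on $E$ given the causal features. The preference-bias condition $p_{\text{train}}(E\mid y_w)>p_{\text{train}}(E\mid y_l)$ then gives $\mathbb{E}[\Delta f_E]>0$, so the gradient descent step strictly increases $w_E$.

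\textbf{Step 2: the increase persists (the ``amplification'' claim).} Here I will use convexity of the logistic loss in $(w_C,w_E)$. The gradient in $w_E$ stays negative as long as the population optimum has $w_E^*>0$. This holds whenever $E$ carries information about $y_w\succ y_l$ beyond $f_C$, which is exactly the bias condition applied conditionally. Hence $w_E^{(t)}$ increases monotonically toward $w_E^*>0$ under small step sizes.

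This is the step I expect to be the main obstacle. Monotonicity of a single coordinate under joint gradient descent is not automatic, because the two coordinates are coupled. I would first try to assume $f_E$ is conditionally independent of $f_C$ given the preference label, which decouples the coordinates. Failing that, I would state the result for $w_E^*$ and for gradient flow along the $w_E$ coordinate only.

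\textbf{Step 3: the generalization bound.} I will define the generalization error as the difference in expected log-likelihood (or reward) of $\pi_\theta$ under test versus train, so that the causal part cancels because $p(Y\mid X,\text{causal})$ is invariant. What remains is
\[
\mathbb{E}_{p_{\text{test}}(x)}\Big[\big|\mathbb{E}_{p_{\text{train}}(E\mid x)}g-\mathbb{E}_{p_{\text{test}}(E\mid x)}g\Big|\Big],
\qquad g=w_E f_E .
\]
The covariate mismatch in $x$ is absorbed by evaluating under $p_{\text{test}}(x)$. I will then apply the standard inequality
\[
\big|\mathbb{E}_P g-\mathbb{E}_Q g\big|\le 2\|g\|_\infty\,\mathrm{TV}(P,Q),
\]
with $\|g\|_\infty\le C|w_E|$. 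This yields $2C|w_E|\,\mathbb{E}\big[\mathrm{TV}\big]$. Combined with Step 2, the right-hand side grows with training whenever the TV term is nonzero, which gives the stated degradation.
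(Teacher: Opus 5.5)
Your route is the paper's: a log-linear score with a spurious coordinate $w_E f_E$, a sign argument for $\partial_{w_E}\mathcal{L}_{\mathrm{DPO}}$ driven by the bias condition, and then $|f_E|\le C$ with $\int|f_E|\,|p_{\text{test}}(E\mid x)-p_{\text{train}}(E\mid x)|\,dE\le 2C\,\mathrm{TV}$. Your Step 3 is exactly the paper's derivation of the bound and is fine; the paper simply \emph{defines} $\mathrm{GenErr}$ as the $w_E f_E$ discrepancy after assuming $p_{\text{test}}(y\mid x)\approx p_{\text{train}}(y\mid x)$. The weak points are in the amplification part.

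In Step 1, knowing that $\sigma(-r_\theta)$ depends only on the causal features does not let you pass from $\mathbb{E}[\Delta f_E]>0$ to $\mathbb{E}[\sigma(-r_\theta)\Delta f_E]>0$. If $w_C\neq0$ and $\Delta f_C$ is correlated with $\Delta f_E$, the weights can concentrate on pairs with $\Delta f_E<0$ and flip the sign. There are two fixes. You can use the standard DPO initialization $\pi_\theta=\pi_{\text{ref}}$, so that $w_C=w_E=0$, $\sigma(-r_\theta)\equiv\tfrac12$ and $\partial_{w_E}\mathcal{L}_{\mathrm{DPO}}=-\tfrac{\beta}{2}\mathbb{E}[\Delta f_E]<0$. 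Or you can assume $\Delta f_C\perp\Delta f_E$, which gives $\mathbb{E}[\sigma(-\beta w_C^\top\Delta f_C)]\,\mathbb{E}[\Delta f_E]>0$ for every $w_C$. The paper makes the same inference without qualification (``$1-\sigma>0$ and $\mathbb{E}_D[\Delta f_E]>0$, hence $\nabla_{w_E}\mathcal{L}_{\mathrm{DPO}}<0$''), and it does so at every iterate.

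In Step 2, your first fallback does not decouple the coordinates. Even with $\Delta f_C\perp\Delta f_E$, the gradient $\partial_{w_E}\mathcal{L}_{\mathrm{DPO}}=-\beta\,\mathbb{E}\bigl[\sigma\bigl(-\beta(w_C^\top\Delta f_C+w_E\Delta f_E)\bigr)\Delta f_E\bigr]$ still depends on $w_C$ through the nonlinear sigmoid. For example, let $\Delta f_E=\pm1$ with probabilities $p>\tfrac12$ and $1-p$. The best $w_E$ at a fixed causal margin $a$ is $\beta^{-1}\log\frac{p}{1-p}$ for $a=0$, but it tends to $\tfrac12\beta^{-1}\log\frac{p}{1-p}$ as $a\to\infty$. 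So once $w_C$ has grown, the $w_E$-gradient turns positive at values of $w_E$ reached earlier, and $w_E^{(t)}$ can decrease.

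Independence does yield $w_E^*>0$ from the marginal hypothesis. Without it, $w_E^*>0$ is a strictly stronger assumption than the one stated. If preferences follow a Bradley--Terry model in $f_C$ alone and $f_E$ merely correlates with $f_C$, then $w_E^*=0$ although $\mathbb{E}[\Delta f_E]>0$.

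Your second fallback is the one that works. Freeze $w_C$, at $0$, or at any value under independence. Then $w_E\mapsto\mathcal{L}_{\mathrm{DPO}}$ is convex with $L$-Lipschitz derivative, $L\le\beta^2\mathbb{E}[\Delta f_E^2]/4$, and its derivative at $0$ is negative. For $\eta\le1/L$ the iterates therefore increase monotonically without overshooting the minimizer, or go to $+\infty$ if there is none. That is already more than the paper proves; it just assumes a constant expected gradient to get $w_E^{(T)}\approx w_E^{(0)}+\eta T\beta\,\mathbb{E}_D[\Delta f_E]$.

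Finally, ``the right-hand side grows, which gives the stated degradation'' does not follow, since a growing upper bound says nothing about the error itself. Use instead that, before the TV step, your quantity equals $|w_E|\cdot\mathbb{E}_{p_{\text{test}}(x)}\bigl|\mathbb{E}_{p_{\text{test}}(E\mid x)}f_E-\mathbb{E}_{p_{\text{train}}(E\mid x)}f_E\bigr|$. Its second factor does not depend on $\theta$, so the error itself grows linearly in $|w_E|$ whenever that factor is nonzero.
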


\subsection{Causal Objective: Backdoor Adjustment via Intervention}
The above theoretical analysis suggests that improving the generalization of DPO relies on removing the influence of confounders and guiding large language models to capture the stable preference patterns of users. Specifically, this can be achieved by optimizing the causal objective $(Y | do(X))$, which enhances the alignment with true preference signals. The $do$-operator represents an intervention on the input variable $X$, severing its dependence on potential confounders $E$ in the causal graph. This intervention removes the effect of $E$ on $X$, preventing the model from learning spurious correlations and enabling more robust causal modeling of user preferences.

When applying causal intervention methods to eliminate the influence of confounding factors, practical resource constraints often make it infeasible to estimate $p(Y|do(X))$ through actual physical interventions on the input variable $X$. For example, conducting randomized controlled trials or recollecting large-scale data under all possible environmental contexts to achieve complete environment randomization is prohibitively costly and difficult to implement in real-world applications. Therefore, a more practical approach is to learn from observational data. Specifically, to estimate the true causal effect of $X$ on $Y$, it is essential to block all backdoor paths induced by potential confounders. As shown in Figure \ref{fig: Case} right part (b), by implementing the $do(X)$ intervention in the causal graph, we can sever the edge $E \to X$, thereby eliminating the backdoor path induced by $E$ and enabling causal preference modeling. Building on this principle, we have:
\begin{equation}
\begin{aligned}
p(Y|do(X))=& \sum p_\theta(\hat{Y} \mid X, E=e) \cdot p(E=e)\\ &=\mathbb{E}_{p_{\theta}(E)}\left[p(Y|E,X)\right],
\label{eq: env prior}
\end{aligned}
\end{equation}
where $p_\theta$ denotes the prior distribution over environments, capturing the likelihood of different environments $E$ without observational data. The detailed derivation of the above formulas can be found in Appendix \ref{ap: derivation}. In real-world data, the environment $E$ is typically unobservable, making it challenging to explicitly characterize environmental factors and instantiate $p(Y|X, E)$. Moreover, existing studies\cite{wu2024graph,ding2025few} have shown that even when explicit environmental information is available in the data, it may still be insufficient to effectively guide the model toward learning stable relationships that are robust to distribution shifts.
To mitigate the confounding effects caused by the unobserved environment $E$, we propose a data-driven approach that employs $\textbf{soft clustering}$ to implicitly learn environmental partitions from the data, thereby modeling and adjusting for the latent environment. 

\subsection{Architecture of CausalDPO}
The core idea of CausalDPO is to infer pseudo-environment labels embedded in the output data via causality-driven soft clustering. These inferred labels are used to perform invariant policy learning conditioned by the environment, enhancing the model's generalizability under distribution shift.
 
\textbf{Modeling Unobserved Environments via Soft Clustering}.  
Given a batch of input samples $ X = \{x,y_w,y_l\}_{i=1}^\mathcal{B} $, where each $ x $ contains user-specific contextual information and $\mathcal{B}$ is the batch size, we first obtain the corresponding hidden representations $ h_i $ using the LLMs. These hidden states are then assigned to causal representations $ z_i $ using a causal feature extractor $ g(\cdot) $ to capture stable causal relationships between input and latent environment. The $g(\cdot)$ is defined as
$
g(h_i) = W_g h_i + b_g=z_i,
$
where $W_g \in \mathbb{R}^{d' \times d}$, and $d'$ is typically set to a lower dimension for clustering stability.
After obtaining the causal representations $Z=\{z_i\}_{i=1}^{\mathcal{B}}$, we perform unsupervised environment discovery by first applying DBSCAN~\cite{khan2014dbscan} for an initial hard clustering. We choose DBSCAN for its robustness to noise and its ability to capture arbitrarily shaped clusters, unlike K-Means which assumes spherical  structure. We then compute the center $c_k$ for each discovered cluster as:
\begin{equation}
c_k = \frac{1}{|S_k|} \sum_{i \in S_k} z_i, \quad S_k = \{i \mid y_i = k\},  
\end{equation}
where $ y_i $ denotes the cluster assignment of sample $ i $, and $ S_k $ is the index set of samples assigned to cluster $ k $.
To derive a soft clustering assignment, we measure the distance between each representation $ z_i $ and cluster center $ c_k $ using Euclidean distance $D_{ik} = \| z_i - c_k \|_2$.
We then transform the distances into probabilities via the softmax function:  
\begin{equation}
 p_{ik} = \frac{\exp(-D_{ik})}{\sum_j \exp(-D_{ij})}, \quad p_i \in \Delta_K,   
 \label{eq: softmax_ag}
\end{equation}
where $ p_{ik} $ denotes the probability that sample $ i $ belongs to cluster $ k $, and $ \Delta_K $ represents the $ K $-dimensional probability simplex. This soft assignment enables a probabilistic view of environment membership for each sample, facilitating robust environment-aware learning.

To support environment-aware learning with soft cluster assignments, we design a soft grouping mechanism for mini-batches Given a batch $\mathcal{B}$ of input samples $ X = \{x,y_w,y_l\} $, and the corresponding soft cluster probabilities $ p_{ik} $, which represents the soft membership of sample $ i $ over $ K $ latent environments, we aim to construct a soft representation for each environment.
For a given cluster (environment) $k$, a soft aggregated representation $\bar{x}^{(k)}$ is computed as a weighted average of the inputs across the batch, where the cluster probabilities give the weights:
\begin{equation}
 \bar{x}^{(k)} = \frac{\sum_{i=1}^{\mathcal{B}} p_{ik}\cdot x_i}{\sum_{i=1}^{\mathcal{B}} p_{ik}}, \quad \text{for } k\in\{1,\dots,K\}.
 \label{eq: weight}
\end{equation}
Here, $K$ denotes the \emph{number of clusters discovered by DBSCAN} on the current batch representations (excluding outliers labeled as noise), and thus is \emph{data-driven rather than manually specified}. In practice, DBSCAN determines $K$ implicitly through its hyperparameters, which control neighborhood density and robustness to noise.
This formulation ensures that each environment-specific representation integrates information from all samples in the batch, proportionally weighted by their soft assignments to the corresponding environment. Such a soft grouping approach allows the model to learn disentangled environment-invariant features in a fully differentiable and data-efficient manner.
Furthermore, we integrate the soft-clustering-based environment inference with the causal objective in Eq.~(\ref{eq: env prior}).

Given the discovered $K$ clusters, each sample $z_i$ is assigned a soft membership distribution
$p(E=k\mid z_i)=p_{i,k}$.
We use these soft assignments to \emph{estimate} the environment prior $p(E)$ from data.
In practice, we approximate $p(E)$ by a mini-batch Monte Carlo estimator, i.e., the average membership probability across the batch:$\hat{p}(E = k) = \frac{1}{N} \sum_{i=1}^{N} p(E = k \mid Z_i)$.
The approximate target distribution in Eq. (\ref{eq: env prior}) can be rewritten as:
$p_{\theta}(Y \mid do(X)) \approx \sum_{k=1}^{K} p_{\theta}(Y \mid x, E = k) \cdot \hat{p}(E = k).$
Combining the back-door adjustment criterion, we weight the environment-conditioned preference model $p(Y\mid X, E=k)$ by the inferred environmental distribution $p(E)$, thereby approximating the causal objective $p(Y\mid do(x))$. Notably, the soft-clustering module is co-optimized with the main model parameters during training; this joint optimization allows the model to progressively refine the environment partition according to the empirical data distribution and task objectives, effectively mitigating the adverse effects of suboptimal initialization.

\textbf{Policy Invariant Learning}.
We adopt soft clustering to assign pseudo-environment labels to samples, grouping together those affected by similar environmental factors. To learn stable preference patterns within groups and to regularize the global distribution across groups—thereby preventing DPO from overfitting to any particular environment when aligning the LLM to human preferences, we further introduce the principle of invariant learning, minimizing the discrepancies among these subsets during DPO optimization. Specifically, we use the Maximum Mean Discrepancy (MMD) metric to minimize the differences among outputs across environments, thereby yielding the overall optimization objective of CausalDPO:
\vspace{-0.1cm}
\begin{equation}
\begin{aligned}
\min_{\theta} \Big\{  \mathcal{L}_{\mathrm{DPO}}(\theta) + \lambda \cdot \mathrm{MMD}(p_m, p_{m'}) \Big\}, \\ 
\mathcal{L}_{\mathrm{DPO}}(\theta) = -\mathbb{E}_{(x, y_w, y_l) \sim \mathcal{O}} \Big[\log \sigma\big(r_{\theta}(x, y_w) - r_{\theta}(x, y_l)\big)\Big], \\
\mathrm{MMD}(p_m, p_{m'}) = \!\!\! \sum_{\substack{m, m' \in \mathcal{E}_{\mathrm{train}}}} \!\!\! \mathrm{MMD}\Big(
    p_m \big(\pi_\theta(x, y_w) \mid do(x)\big), \\
     p_{m'} \big(\pi_\theta(x, y_w) \mid do(x)\big)
\Big).
\end{aligned}
\label{eq: causalDPO}
\end{equation}
Where $\mathcal{L}_{\mathrm{DPO}}$ denotes the DPO loss, as formally defined in Eq. (\ref{eq:dpo}), $\lambda$ is a hyperparameter that balances the preference learning objective and the invariance regularization term. The reward function is defined as $
r_\theta(x, y) = \beta \log \frac{\pi_\theta(x, y)}{\pi_{\mathrm{ref}}(x, y)}$,
The indices $m$ and $m'$ range over the pseudo-environment set $\mathcal{E}_{\mathrm{train}}$, derived from soft clustering over the training data. The term $p_m\!\left(z \mid do(x)\right)$ denotes the distribution of an output representation $z$ (e.g., the policy score or hidden-state embedding induced by $\pi_\theta$ for $(x,y_w)$) under the causal intervention $do(x)$ within pseudo-environment $m$, characterizing how the model responds to the same intervened input across environments.
The reason CausalDPO uses MMD as the regularization term is that it measures and minimizes discrepancies in the model’s output distributions across environments, encouraging the model to ignore environment-specific noise or confounders and focus on features that are stable across all environments and truly reflect causal relationships \cite{modeste2024characterization}. 
Current approaches \cite{wu2024graph,yang2022learning} based on invariant learning for OOD generalization commonly hinge on two fundamental assumptions: \textbf{invariance}, referring to the requirement that the learned representation yields consistent and optimal predictive performance across diverse environments; and \textbf{sufficiency}, which stipulates that the representation encodes all the necessary information for accurately predicting the target variable. We establish the optimality condition for Eq. (\ref{eq: causalDPO}) through the following proposition:

\begin{proposition}[Invariant and sufficient preference policy via CausalDPO.]\label{proposition: 02}
Let a preference policy model $\pi_\theta(y \mid x)$ be trained across environments $\mathcal{E}_{\text{train}} = \{ \mathcal{E}_1, \dots, \mathcal{E}_M \}$ using preference data $(x, y_w, y_l)$, where $y_w \succ y_l$ (denoting that $y_w$
  is strictly preferred over $y_l$). We show that minimizing the objective in Eq. (\ref{eq: causalDPO}) induces a policy that satisfies invariance and sufficiency conditions:
  
   \textbf{Invariance:} The learned policy satisfies environment-invariant behavior over the optimal intervention distribution $do(x), \forall m, m' \in \mathcal{E}_{\text{train}}$, we have:
    \begin{equation}
        p_m\left(\pi_\theta(x, y_w) \mid do(x)\right) = 
        p_{m'}\left(\pi_\theta(x, y_w) \mid do(x)\right),    
    \end{equation}
    as $\mathrm{MMD} \to 0$, the policy distribution becomes invariant across environments.
    \item \textbf{Sufficiency.}
The policy retains \emph{sufficient} discriminative power for preferences: for samples
$(x, y_w, y_l) \sim \mathcal{D}_{\text{train}}$, we have, with probability~$1$ under
$\mathcal{D}_{\text{train}},
\pi_\theta(y_w \mid x) \gg \pi_\theta(y_l \mid x) $,
where $\gg$ denotes strict dominance in log-probability. This follows from the
DPO objective’s explicit optimization of the preference log-ratio,
\begin{equation}
\Delta r(x) \;=\; \log \frac{\pi_\theta(y_w \mid x)}{\pi_\theta(y_l \mid x)},
\end{equation}
which preserves the model’s discriminative capacity even under invariance constraints.
\end{proposition}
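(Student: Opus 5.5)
The plan is to treat the objective in Eq.~(\ref{eq: causalDPO}) as a penalized problem and study what its minimizers satisfy, splitting the argument into an invariance part driven by the MMD term and a sufficiency part driven by the DPO term.

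\textbf{Invariance.} First fix a characteristic kernel (e.g.\ Gaussian) on the space of policy scores or representations, so that MMD is a metric on distributions. The key fact is then $\mathrm{MMD}(P,Q)=0 \iff P=Q$. The regularizer is a sum of nonnegative pairwise terms, so driving it to zero forces every pairwise term to vanish.

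To connect this with the objective, consider the constrained reformulation: minimize $\mathcal{L}_{\mathrm{DPO}}$ subject to $\mathrm{MMD}(p_m,p_{m'})\le\epsilon$. The penalty form with multiplier $\lambda$ is its Lagrangian. Let $\theta_\lambda$ be a minimizer of the penalized objective, and let $\theta_0$ be any parameter achieving zero MMD (for example, a policy whose score does not depend on the pseudo-environment; the backdoor-adjusted score $\sum_k p_\theta(Y\mid x,E=k)\hat p(E=k)$ of Eq.~(\ref{eq: env prior}) is such a candidate). Comparing the two objectives gives
\[
\lambda\,\mathrm{MMD}(\theta_\lambda)\le \mathcal{L}_{\mathrm{DPO}}(\theta_0)-\mathcal{L}_{\mathrm{DPO}}(\theta_\lambda)\le \mathcal{L}_{\mathrm{DPO}}(\theta_0).
\]
Hence $\mathrm{MMD}(\theta_\lambda)=O(1/\lambda)$, which tends to $0$. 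Because the kernel is characteristic, weak continuity then yields $p_m(\pi_\theta(x,y_w)\mid do(x)) = p_{m'}(\pi_\theta(x,y_w)\mid do(x))$ in the limit, for all $m,m'$.

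\textbf{Sufficiency.} The DPO loss $-\log\sigma(\beta\Delta r)$ is strictly decreasing in the margin $\Delta r(x)$. Its infimum is $0$, approached only as $\Delta r\to\infty$ on every training triple with positive mass. I will argue that within the invariant feasible set, $\mathcal{L}_{\mathrm{DPO}}$ can still be driven toward its infimum. The causal (environment-invariant) component of the representation already separates $y_w$ from $y_l$, which is the paper's sufficiency assumption on the true preference mechanism, so scaling that component increases $\Delta r$ without altering the cross-environment distribution. Minimizers must therefore satisfy $\Delta r(x)>0$, and in fact large, $\mathcal{D}_{\text{train}}$-almost surely. Otherwise a set of positive measure with $\Delta r\le 0$ contributes at least $\log 2$ times its mass to the loss, and that contribution could be reduced inside the constraint set, contradicting optimality. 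This gives $\pi_\theta(y_w\mid x)\gg\pi_\theta(y_l\mid x)$ with probability $1$.

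\textbf{Main obstacle.} The hard step is showing that the invariance constraint and the sufficiency requirement are compatible. Concretely, there must exist parameters with zero MMD and arbitrarily large preference margin; otherwise the penalty could force $\Delta r$ to collapse. I would handle this by explicitly assuming a decomposition of the representation into a causal part $z_c$ and an environment part $z_e$. The assumption is that $z_c$ has an environment-invariant distribution under $do(x)$ (via the backdoor adjustment) and is predictive of $y_w\succ y_l$. I would then exhibit the family $\theta_t$, with scores linear in $t z_c$, as the witness for both the invariance bound and the margin argument. A secondary subtlety is that the pseudo-environments come from soft clustering rather than true environments. I will state the result relative to $\mathcal{E}_{\text{train}}$ as defined by the clustering and not claim more.
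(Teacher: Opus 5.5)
\textbf{Comparison with the paper.} Your route differs from the paper's. The appendix argues through gradients: a large $\lambda$ is said to force $\nabla_\theta\mathrm{MMD}\to 0$, hence equal mean embeddings, hence $p_m=p_{m'}$ by characteristicness, and sufficiency is simply asserted for ``an expressive $\pi_\theta$'' at a stationary point. You compare objective values instead. For invariance, your bound $\lambda\,\mathrm{MMD}(\theta_\lambda)\le\mathcal{L}_{\mathrm{DPO}}(\theta_0)$ is the sounder argument, because a vanishing gradient of a nonconvex penalty does not give vanishing MMD. Your invariant, preference-separating component $z_c$ also makes explicit the assumption the paper leaves implicit (``the MMD constraint preserves invariant features predictive of preferences'').

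\textbf{The gap: the sufficiency step fails as written,} for three concrete reasons.

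(i) Under your own compatibility assumption the penalized objective has no minimizer. The witnesses $\theta_t$ have $\mathrm{MMD}(\theta_t)=0$ and $\mathcal{L}_{\mathrm{DPO}}(\theta_t)\to0$, so the infimum is $0$. Yet $\mathcal{L}_{\mathrm{DPO}}(\theta)>0$ for every finite $\theta$. So ``let $\theta_\lambda$ be a minimizer'' and ``minimizers satisfy $\Delta r>0$ a.s.'' are vacuous.

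(ii) Even setting existence aside, the local contradiction step fails. Adding $t\,w^\top z_c$ to the score of an arbitrary $\theta_\lambda$ does not preserve the cross-environment distributions. The law of the sum in each pseudo-environment depends on the joint law of the current score and $z_c$, not only on the invariant marginal of $z_c$, so the MMD term can grow.

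(iii) The bound ``a set with $\Delta r\le 0$ costs at least $\log 2$ times its mass'' treats the loss as $-\log\sigma(\beta\Delta r)$. The DPO loss is actually a function of $\beta(\Delta r-\Delta r_{\mathrm{ref}})$, with $\Delta r_{\mathrm{ref}}=\log\frac{\pi_{\mathrm{ref}}(y_w\mid x)}{\pi_{\mathrm{ref}}(y_l\mid x)}$. A triple with $\Delta r\le0$ can therefore have arbitrarily small loss when $\Delta r_{\mathrm{ref}}$ is very negative.

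\textbf{Repair.} Drop the local argument and use only the global comparison with $\varepsilon$-minimizers.
\begin{itemize}
\item Since the infimum is $0$, any $\theta$ with objective value at most $\varepsilon$ satisfies both $\mathrm{MMD}(\theta)\le\varepsilon/\lambda$ and $\mathcal{L}_{\mathrm{DPO}}(\theta)\le\varepsilon$.
\item Because $-\log\sigma$ is positive and decreasing, Markov's inequality gives $P\bigl(\beta(\Delta r-\Delta r_{\mathrm{ref}})\le M\bigr)\le\varepsilon/(-\log\sigma(M))$ for every $M$.
\item Letting $\varepsilon\to0$ yields $\Delta r\to\infty$ in $\mathcal{D}_{\text{train}}$-probability, provided $\Delta r_{\mathrm{ref}}$ is a.s.\ finite.
\item Integrability of $\Delta r_{\mathrm{ref}}$ and of the $z_c$-margin also gives $\mathcal{L}_{\mathrm{DPO}}(\theta_t)\to0$ by dominated convergence.
\end{itemize}
This delivers invariance and sufficiency together for any fixed $\lambda>0$, without sending $\lambda\to\infty$. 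It also shows that ``with probability $1$'' can only hold in this limiting sense, just as invariance holds only as $\mathrm{MMD}\to0$.

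\textbf{Two smaller points.} First, the backdoor-adjusted score is not a zero-MMD witness under the empirical, cluster-wise reading of $p_m$, since different clusters carry different $x$-distributions. Under the literal $do(x)$ reading, every $x$-measurable policy is trivially invariant. Use $\theta_t$ itself as the witness. Second, the DBSCAN partition moves with $\theta$. The invariance of $z_c$ must therefore be assumed for the partition held fixed, as you propose, or uniformly along $\theta_t$.
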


\begin{proposition}[Generalization Bound for CausalDPO.] \label{proposition: 03}
Let  $\pi_\theta \in \Pi$ denote the learned policy parameterized by $\theta$, where $\Pi$ is the policy class. Let $p(e)$ and $q(e)$ denote the test and training environment distributions, respectively, over the environment space $\mathcal{E}$. Assume the function $f: \mathcal{E} \to \mathbb{R}$ belongs to a reproducing kernel Hilbert space (RKHS) $\mathcal{H}_k$ with kernel $k$, and satisfies $\|f\|_{\mathcal{H}_k} \leq C$ for some $C > 0$. Let $\ell: \mathbb{R} \times \mathcal{E} \to \mathbb{R}$ be a loss function that is $L_\ell$-Lipschitz in its first argument and bounded by $B > 0$. Then, with probability at least $1 - \delta$ over the draw of $n$ training environments, the generalization error for the risk $R_e(\pi_\theta) = \mathbb{E}_{a \sim \pi_\theta}[\ell(f(e, a), e)]$ satisfies:
\begin{equation}
\begin{aligned}
 \sup_{e \in \mathcal{E}_{\text{train}}} \left| R_e(\pi_\theta) - \hat{R}_e(\pi_\theta) \right| & \leq 2 \mathcal{R}_n(\Pi) + B \sqrt{\frac{2 \log(2/\delta)}{n}} \\ & + L_\ell \cdot \text{MMD}_{\mathcal{H}_k}(p, q),      
\end{aligned} 
\end{equation}
where $\hat{R}_e(\pi_\theta)$ is the empirical risk over $n$ training environments $e_i \sim q(e)$.
 $\mathcal{R}_n(\Pi)$ is the empirical Rademacher complexity of the policy class $\Pi$ over $n$ samples. The MMD between two distributions $p$ and $q$ in the reproducing kernel Hilbert space $\mathcal{H}_k$ is defined as
\begin{equation}
\text{MMD}_{\mathcal{H}_k}(p, q) 
= \sup_{\|f\|_{\mathcal{H}_k} \leq 1} 
\left| \mathbb{E}_{e \sim p}[f(e)] - \mathbb{E}_{e \sim q}[f(e)] \right| .
\end{equation}
\end{proposition}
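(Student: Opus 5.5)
The plan is to split the quantity $R_e(\pi_\theta)-\hat{R}_e(\pi_\theta)$ into two parts: a statistical estimation error and a distribution-shift error. We introduce the intermediate population risk under the training environment distribution, $R_q(\pi_\theta)=\mathbb{E}_{e\sim q}\mathbb{E}_{a\sim\pi_\theta}[\ell(f(e,a),e)]$, together with its test analogue $R_p(\pi_\theta)$ defined with $e\sim p$. The triangle inequality then gives
\begin{equation*}
\bigl|R_e(\pi_\theta)-\hat{R}_e(\pi_\theta)\bigr| \le \bigl|R_p(\pi_\theta)-R_q(\pi_\theta)\bigr| + \bigl|R_q(\pi_\theta)-\hat{R}_e(\pi_\theta)\bigr|.
\end{equation*}
Here the target risk $R_e$ is read as the risk under $p$, and $\hat{R}_e$ as the empirical average over the $n$ environments $e_i\sim q$. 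We bound the second term uniformly over $\Pi$, which is what makes the supremum in the statement harmless, and the first term by the MMD.

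For the estimation term we use the standard uniform-convergence argument. Consider $\Phi(e_1,\dots,e_n)=\sup_{\pi\in\Pi}|R_q(\pi)-\hat{R}(\pi)|$. Since $\ell$ is bounded by $B$, replacing a single $e_i$ changes $\Phi$ by at most $2B/n$. McDiarmid's inequality then gives $\Phi\le\mathbb{E}\Phi+B\sqrt{2\log(2/\delta)/n}$ with probability $1-\delta$, where the two-sided absolute value accounts for the $2/\delta$. Symmetrization bounds $\mathbb{E}\Phi$ by $2\mathcal{R}_n$ of the induced loss class. Following the statement's convention, we take $\mathcal{R}_n(\Pi)$ to denote the complexity of this loss class. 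Alternatively, one can invoke Talagrand's contraction through the $L_\ell$-Lipschitz property and absorb the constant. If we pass to the empirical Rademacher complexity, this costs an extra bounded-differences step, which only changes constants; we absorb those into the stated form.

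For the shift term, define $h(e)=\mathbb{E}_{a\sim\pi_\theta}[\ell(f(e,a),e)]$, so that $R_p-R_q=\mathbb{E}_p h-\mathbb{E}_q h$. If $h/\|h\|_{\mathcal{H}_k}$ lies in the unit ball, the definition of MMD gives $|R_p-R_q|\le\|h\|_{\mathcal{H}_k}\,\mathrm{MMD}_{\mathcal{H}_k}(p,q)$. The remaining task is to show that $\|h\|_{\mathcal{H}_k}\lesssim L_\ell$, using the Lipschitz property of $\ell$ together with $\|f\|_{\mathcal{H}_k}\le C$, which the statement normalizes so that the constant reads $L_\ell$.

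\textbf{Main obstacle.} The hard step is precisely this RKHS membership of the composed loss $h$. Lipschitz composition does not in general preserve RKHS norms. The first thing we would try is an explicit assumption that $e\mapsto\ell(f(e,a),e)$ lies in $\mathcal{H}_k$ with norm at most $L_\ell\|f\|_{\mathcal{H}_k}$. This holds, for example, when $\ell$ is linear in its first argument, or when $k$ is a kernel whose RKHS is closed under Lipschitz maps, such as a Sobolev-type kernel. With that assumption in place we would absorb $C$ into the constant. Summing the two bounds completes the argument.
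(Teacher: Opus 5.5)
Your route is the paper's: the same triangle inequality through the $q$-risk (its steps (I)--(III)), McDiarmid plus symmetrization for the estimation term, and the MMD definition for the shift term, with $R_e$ read as the risk under $p$. The obstacle you flag is genuine, and the paper does not overcome it. In step (II) it only derives $|g_\pi(e)-g_\pi(e')|\le 2L_\ell C\,\|k(e,\cdot)-k(e',\cdot)\|_{\mathcal{H}_k}$ and then asserts that $g_\pi$ ``inherits smoothness in $\mathcal{H}_k$''. That derivation already ignores the direct dependence of $\ell$ on $e$ through its second argument, where no regularity is assumed. More importantly, Lipschitz continuity in the kernel metric $d_k(e,e')=\|k(e,\cdot)-k(e',\cdot)\|_{\mathcal{H}_k}$ only bounds $|\mathbb{E}_p g_\pi-\mathbb{E}_q g_\pi|$ by $2L_\ell C\,W_{1,d_k}(p,q)$, a Wasserstein distance. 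Every function in the unit ball of $\mathcal{H}_k$ is $1$-Lipschitz for $d_k$, so $\mathrm{MMD}_{\mathcal{H}_k}(p,q)\le W_{1,d_k}(p,q)$ --- the inequality runs the wrong way.

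In fact the gap cannot be closed under the stated hypotheses. Suppose $|\mathbb{E}_p g-\mathbb{E}_q g|\le K\,\mathrm{MMD}_{\mathcal{H}_k}(p,q)$ for all finitely supported $p,q$. Then $\mu_\nu\mapsto\int g\,d\nu$ is a bounded functional of norm at most $K$ on kernel embeddings of zero-mass signed measures, and Riesz gives $g=h+c$ with $\|h\|_{\mathcal{H}_k}\le K$. So an assumption of your type is necessary, not merely convenient.

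Concretely, take the following instance:
\begin{itemize}
\item the Gaussian kernel $k(e,e')=e^{-(e-e')^2/2}$ on $\mathbb{R}$ and $f=k(\cdot,\tfrac12)-k(\cdot,-\tfrac12)$, so $\|f\|_{\mathcal{H}_k}<1$;
\item the loss $\ell(t,e)=\min\{|t|,1\}$, so $L_\ell=B=1$, and $\Pi$ a singleton;
\item $q$ the point mass at $0$ and $p$ uniform on $\{\pm\epsilon\}$.
\end{itemize}
Then $\hat R=0$ and the loss-class Rademacher term vanishes. The left side is $|f(\epsilon)|=e^{-1/8}\epsilon+O(\epsilon^3)$, while $\mathrm{MMD}_{\mathcal{H}_k}(p,q)=\tfrac{\sqrt{3}}{2}\epsilon^2+O(\epsilon^4)$. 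With $\epsilon=0.1$, $n=10^4$ and $\delta=0.05$, the left side is about $0.088$ and the right side about $0.036$.

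With your extra hypothesis the proof goes through, but two constants need fixing. First, state the hypothesis as $\|g_\pi-c_\pi\|_{\mathcal{H}_k}\le L_\ell$ for some constant $c_\pi$, or keep the factor and conclude with $L_\ell C$. With explicit constants you cannot absorb $C$; the paper silently drops the $2C$ it derived. Second, your Rademacher remark has the same problem. The stated $2\mathcal{R}_n+B\sqrt{2\log(2/\delta)/n}$ is what McDiarmid (bounded differences $2B/n$) and symmetrization give, with room to spare, for the \emph{expected} Rademacher complexity of $\{g_\pi:\pi\in\Pi\}$. Passing to the empirical complexity needs a second concentration step and yields $3B$ in place of $B$, which is not absorbable either.

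Finally, your Sobolev example only works for first-order Sobolev spaces, which are RKHSs only in dimension one; already $H^2$ is not closed under $t\mapsto|t|$.
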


Proposition \ref{proposition: 02} demonstrates the effectiveness of the optimization objective Eq. (\ref{eq: causalDPO}) in improving the generalization performance of DPO on OOD data, thereby enhancing the theoretical soundness and reliability of our method. Proposition \ref{proposition: 03} further shows that the generalization error bound of the model under distribution shifts is controllable. This conclusion can also be interpreted from the perspective of SCM: optimizing Eq. (\ref{eq: causalDPO}) not only helps eliminate spurious dependencies caused by environmental confounders but also strengthens the ability of DPO to model invariant causal mechanisms across multiple environments. Detailed proofs of the above propositions are provided in Appendix \ref{ap: invariant and sufficient} and \ref{ap: generalization causaldpo}. The process pseudocode of fine-tuning CausalDPO is shown in Algorithm \ref{alg:causaldpo}.

\subsection{Time Complexity Analysis.}
Table~\ref{tab: time complixity} reports the algorithmic complexity and per-epoch time for DPO, its variants, and CausalDPO on the Book-Crossing dataset. CausalDPO exhibits a complexity of $\mathcal{O}(B L^{2} d + B^{2})$, where the quadratic term in $B$ arises from MMD-based pairwise environment-consistency computations. Empirically, CausalDPO requires $2971$\,s per epoch versus $2482$\,s for DPO ($+19.70\%$), yet achieves an average performance improvement of $205.9\%$ over DPO, indicating a favorable compute accuracy trade-off. The additional overhead primarily stems from (i) soft clustering for latent-environment inference and (ii) regularization that enforces distributional consistency across environments. Among the baselines, SimPO is the fastest owing to its omission of reference model alignment.

\subsection{Discussion} 
This subsection investigates two key questions: (i) \textbf{the correspondence between clustered environments and the true (unobserved) confounders; and (ii) the behavior of CausalDPO when clustering fails or the OOD assumption does not hold}. To encourage alignment between the inferred environments and the true confounders, CausalDPO incorporates two design choices: (1) it estimates a pseudo-environment distribution $p(\hat{E}\mid z)$ using soft clustering followed by a softmax mapping, and updates the clusters dynamically during training to avoid biases from fixed assignments; and (2) it adopts a soft-assignment scheme under which each example’s environment membership converges probabilistically, thereby better reflecting latent shifts in the data.
Even when the pseudo environments $\hat{E}$ are not perfectly aligned with the true environments $E$, the MMD regularizer reduces distributional discrepancies across pseudo environments, smoothing the learned policy outputs and attenuating environment-specific noise, which in turn improves robustness. Table~\ref{tab:IID_res} reports a comparison among DPO variants under IID conditions; CausalDPO remains superior even when OOD assumptions are not satisfied.

\subsection{Experimental settings}
\textbf{Datasets}.We follow the experimental setup of prior works \cite{wang2024distributionally,zhao2025distributionally}  and conduct systematic evaluations under four distribution shift scenarios: popularity shift, temporal shift, exposure shift, and mixed shift. To assess the robustness of CausalDPO, we test it on three standard benchmarks: Yelp2018 \cite{chitla2024evaluation}, Movielens-10M \cite{zhao2024symmetric}, and Book-Crossing \cite{samridhi2024optimizing}. Detailed dataset specifications and preprocessing steps are provided in Appendix \ref{ap: datasets}.
\textbf{Baselines}. Following the existing research \cite{gao2024sprec}, we systematically compare three major categories of methods in the recommendation field: (1) traditional recommendation models: SASRec \cite{kang2018self}; (2) SFT-based methods: BIGRec \cite{bao2025bi}, RW \cite{jiang2024item}, and $D^3$ \cite{bao2024decoding}; and (3) DPO-optimized models: DMPO \cite{bai2024aligning}, SDPO \cite{chen2024softmax}, RosePO \cite{liao2024rosepo}, and SPRec \cite{gao2024sprec}. We show the details of each model in  Appendix \ref{ap: baseliens}.
\textbf{Evaluation}. We directly follow existing works \cite{bao2024decoding,chen2024softmax,bao2025bi,gao2024sprec,liao2024rosepo} by using prompts to guide LLMs in generating predicted items based on the user's historical interaction sequence. We then perform scoring and ranking across the entire item space and map the predicted items to specific corresponding items in the dataset. The main evaluation metrics employed were HR@K and NDCG@K, where K is set to 10 and 20.
\textbf{Implementation}. The detailed implementation specifics are provided in Appendix \ref{ap: imply}.

\renewcommand{\arraystretch}{1}
\begin{table*}[t]
\centering
\setlength{\tabcolsep}{2pt}  
\caption{The performance (\%) comparison between the baselines and CausalDPO on the three datasets with three \textbf{OOD distributions}. We highlight the methods with the \textcolor{darkred}
{\textbf{best}} and \textcolor{blue!40}{\textbf{second-best}} results.}
\vspace{-0.3cm}
\label{tab: main_res}
\begin{tabular}{c|cccc|cccc|cccc}
\toprule
\multirow{2}{*}{\textbf{Methods}} 
& \multicolumn{4}{c|}{\textbf{Yelp2018}} 
& \multicolumn{4}{c|}{\textbf{ML-10M}} 
& \multicolumn{4}{c}{\textbf{Book-Crossing}} \\
\cmidrule(lr){2-5} \cmidrule(lr){6-9} \cmidrule(lr){10-13}
& \textbf{H@10} & \textbf{N@10} & \textbf{H@20} & \textbf{N@20} 
& \textbf{H@10} & \textbf{N@10} & \textbf{H@20} & \textbf{N@20} 
& \textbf{H@10} & \textbf{N@10} & \textbf{H@20} & \textbf{N@20} \\
\midrule
SASRec & 1.01 & 0.43 & 1.51 & 0.61 & 0.50 &0.02	& 0.84	& 0.30  & 0.33	& 0.21 & 0.68 & 0.21 \\
\hline
BIGRec & 0.77 & 0.50 & 1.28 & 0.62 & 0.53 & 0.29 & 0.76 & 0.35 & 0.53 & \textcolor{blue!40}{\textbf{0.30}} & \textcolor{blue!40}{\textbf{0.98}} & \textcolor{blue!40}{\textbf{0.43}} \\
RW & 0.51 & 0.42 & 1.02 & 0.56 & 0.50 & 0.28 & 0.74 & 0.30 & \textcolor{blue!40}{\textbf{0.59}} & 0.29 & 0.91 & 0.41 \\
D3 & 0.63 & 0.38 & 	1.15& 0.51  & 0.50 & 0.25 & 0.80 & 0.34 & 0.50 & 0.27 & 0.94 & 0.40 \\
\hline
DMPO & 0.76 & 0.26 & 1.54 & 0.43 & 0.44 & 0.20 & 0.81 & 0.29 & 0.19 & 0.10 & 0.39 & 0.15 \\
SDPO & 0.76 & 0.54 & 1.28 & 0.66 & \textcolor{blue!40}{\textbf{0.55}} & \textcolor{blue!40}{\textbf{0.30}} & \textcolor{blue!40}{\textbf{0.87}} & \textcolor{blue!40}{\textbf{0.38}} & 0.49 & 0.25 & 0.74 & 0.31 \\
RosePO & 0.82 & 0.47 & 1.36
   & 0.60 & 0.51 & 0.26 & 0.83 & 	0.33   & 0.40	& 0.20 & 0.71 & 	0.36   \\
SPRec & \textcolor{blue!40}{\textbf{1.02}} & \textcolor{blue!40}{\textbf{0.61}} & \textcolor{blue!40}{\textbf{1.64}} & \textcolor{blue!40}{\textbf{0.79}} & \textcolor{blue!40}{\textbf{0.55}}& \textcolor{blue!40}{\textbf{0.30}} & 0.71 & 0.36 & 0.39 & 0.18 & 0.69 & 0.25 \\
\textbf{Ours} & \textcolor{darkred}{\textbf{1.53}} & \textcolor{darkred}{\textbf{0.77}} & \textcolor{darkred}{\textbf{1.79}} & \textcolor{darkred}{\textbf{0.82}} & 
\textcolor{darkred}{\textbf{0.61}} & \textcolor{darkred}{\textbf{0.35}} & \textcolor{darkred}{\textbf{1.02}} & \textcolor{darkred}{\textbf{0.45}} & 
\textcolor{darkred}{\textbf{0.64}} & \textcolor{darkred}{\textbf{0.37}} & \textcolor{darkred}{\textbf{1.08}} & \textcolor{darkred}{\textbf{0.48}} \\
\bottomrule
\end{tabular}
\end{table*}
\subsection{Comparative Results}
\textbf{Evaluation on popularity shift}.
As shown in Table \ref{tab: main_res}, CausalDPO significantly outperforms all baselines on the Yelp2018 dataset, with an average relative gain of 22.29\% over the strongest competitor. This substantial improvement underscores its effectiveness in handling popularity-based distribution shifts, especially in modeling preferences for long-tail items.
\textbf{Evaluation on temporal shift}. Table \ref{tab: main_res} compares the test performance of various models on the Movielens-10M dataset.
Our proposed CausalDPO significantly outperforms all baselines across key metrics, with an average performance gain of 24.06\%, demonstrating its strength in modeling temporal dependencies.
By leveraging causal structure modeling and invariant causal learning, CausalDPO captures stable causal relations, mitigates temporal shifts, and improves generalization.
\textbf{Evaluation on exposure shift}. In real-world scenarios, users are typically exposed to a limited set of items, leading to non-random missing data, known as exposure shift.
Experimental results in Table \ref{tab: main_res} on a synthesized fully exposed version of the Book-Crossing dataset show that CausalDPO consistently outperforms baselines, with performance gains of 8.47\% to 23.33\%, validating its effectiveness in mitigating the impact of exposure shift on generative recommendation models.
\textbf{Evaluation on mixed shift}.  Table \ref{tab: mixed shifts} reports the experimental results on hybrid biased data. The results show that CausalDPO consistently outperforms all baseline models under both popularity and noisy shifts, demonstrating its strong ability to ensure OOD generalization for generative recommendation models under multiple distributional shifts.
Moreover, Table~\ref{tab:IID_res} shows that CausalDPO remains competitive under IID while retaining strong OOD generalization; meanwhile, SASRec excels on sparse data (e.g., Book-Crossing) because self-attention, with positional encoding and localized attention learns effective user representations from sparse interaction sequences.

\begin{figure*}[t]
	\centering
	\begin{minipage}{0.23\linewidth}
\centerline{\includegraphics[width=\textwidth]{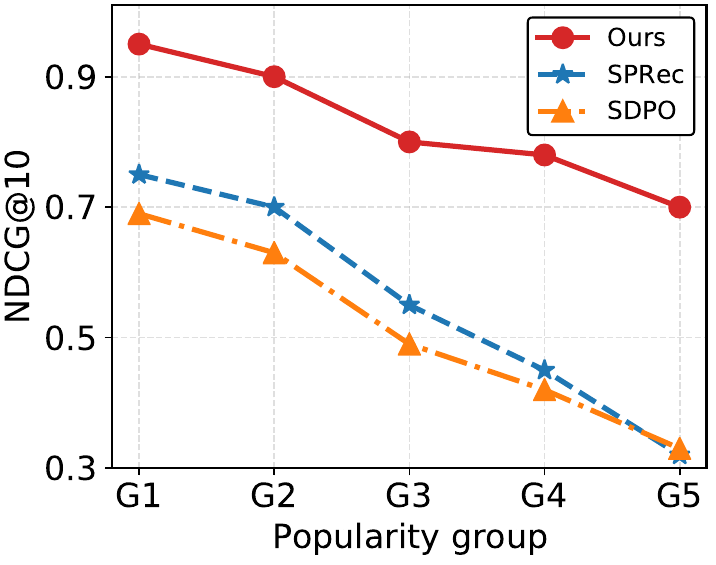}}
	\end{minipage}
	\begin{minipage}{0.23\linewidth}
	\centerline{\includegraphics[width=\textwidth]{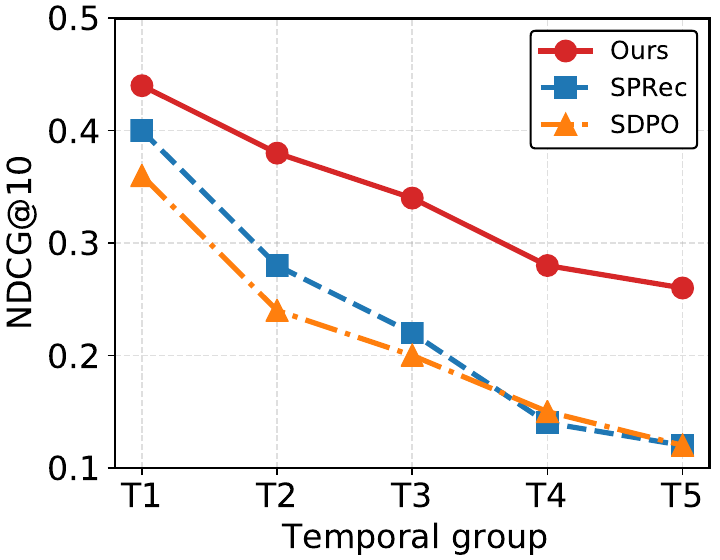}}
	\end{minipage}
	\begin{minipage}{0.23\linewidth}
	\centerline{\includegraphics[width=\textwidth]{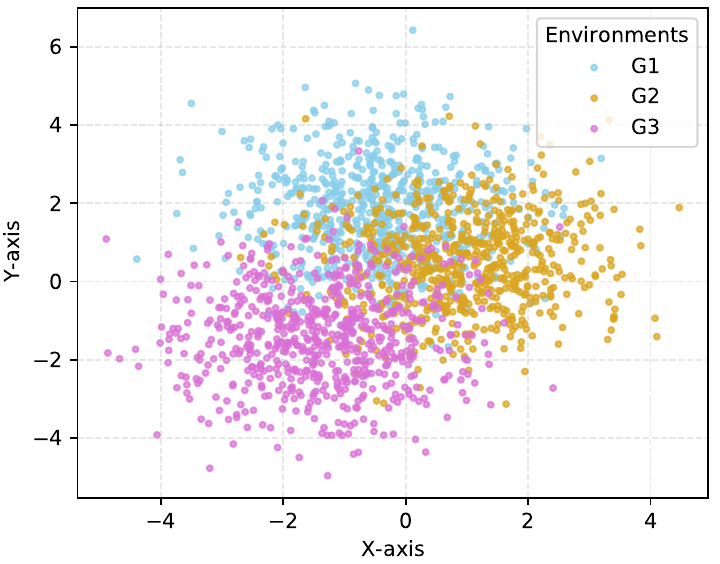}}
	\end{minipage}
    \begin{minipage}{0.23\linewidth}
	\centerline{\includegraphics[width=\textwidth]{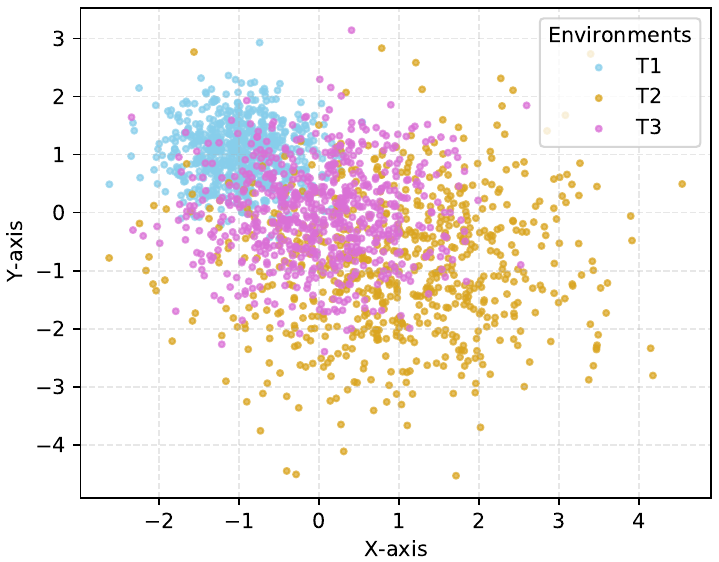}}
	\end{minipage}
    \begin{center}
     \end{center}
     \vspace{-0.3cm}
	\caption{Further study on performance under distribution shifts and clustering Visualization.}
	\label{fig: clustering}
    \vspace{-0.3cm}
\end{figure*}

\subsection{Ablation Studies}
We present the results of ablation studies in Table \ref{tab: ablation}. Specifically, \textbf{w/o SFT} denotes skipping the supervised fine-tuning stage and directly performing preference alignment; \textbf{w/o CausalDPO} denotes removing preference alignment and applying only supervised fine-tuning; and \textbf{w/o Both} indicates disabling both stages. Based on the results, we make the following key observations: (1) On both datasets, the \textbf{w/o SFT} variant yields the worst performance, even inferior to the \textbf{w/o Both} setting, highlighting the critical role of supervised fine-tuning in establishing a strong base for LLM-based recommendation.
(2) Removing CausalDPO while retaining supervised fine-tuning leads to a significant performance drop across both datasets, demonstrating the importance of CausalDPO in mitigating spurious correlations caused by environmental confounders and improving generalization under distribution shift. Overall, the ablation studies provide strong evidence for the effectiveness and necessity of each proposed module in CausalDPO.
In addition, Table \ref{tab: causal_reg_comparison} compares several DPO variants originally designed for non-recommendation tasks (e.g., SimPO \cite{meng2024simpo}, CPO \cite{guo2024controllable}). These methods exhibit poor OOD generalization compared to CausalDPO. We further extend these variants by integrating the CausalDPO framework as their backbone. The results show consistent improvements in OOD performance, suggesting the general applicability of CausalDPO. A detailed analysis of performance and time complexity is shown in Appendix \ref{ap: further and time}. Moreover, in Appendix \ref{ap: further and time}, we explore the performance of CausalDPO using LLMs of varying parameter scales as the backbone.

\subsection{Qualitative Interpretation of Pseudo-Environment Clustering.}
To understand what the inferred pseudo-environments $\hat{E}$ capture, we conduct qualitative analyses under two representative shift factors: \textbf{popularity bias} on Yelp2018 and \textbf{temporal drift} on ML-10M.
Figure~\ref{fig: clustering} (left) reports NDCG@10 from head to tail groups (G1$\rightarrow$G5). While all methods degrade as popularity decreases, CausalDPO exhibits a flatter drop and consistently outperforms the baselines, with especially clear gains on long-tail groups (G4--G5).
Figure~\ref{fig: clustering} (middle-left) shows performance over time (T1$\rightarrow$T5), where CausalDPO remains notably more stable and the baselines deteriorate more sharply in later periods, indicating stronger robustness to preference drift.
We further visualize low-dimensional projections of learned representations colored by pseudo-environment assignments (Figure~\ref{fig: clustering}, right).
On Yelp2018, clusters correlate with popularity strata and show a head-to-tail transition pattern, reflecting popularity-driven exposure heterogeneity.
On ML-10M, samples form stage-wise groupings with observable shifts, characterizing temporal preference drift.
Overall, these results suggest that CausalDPO identifies meaningful latent environment structure in the representation space, supporting cross-environment invariance regularization and improving OOD generalization under distribution shifts.

\section{Related Works}
\textbf{LLM for Recommendation}.
The application of LLMs in recommender systems \cite{kim2024large,fan2024recommender,li2023large, lin2025can,wu2024survey,liu2025large,liu2025llmemb,liu2024llm} can be broadly categorized into two approaches: augmentation-based methods and generation-based methods. Augmentation-based methods \cite{zhao2025can,liu2024once,ren2024representation,wei2024llmrec,dang2026exploring,dang2026tail} leverage the extensive world knowledge encoded in LLMs to generate additional semantic information, such as user profiles or completed interaction sequences, through carefully designed prompts. This information is used to enhance traditional ID-based models by enriching feature representations and improving the modeling of user preferences. In addition, some studies \cite{cui2024distillation,du2025active,li2023prompt} have explored using knowledge distillation to transfer the reasoning capabilities of LLMs to traditional recommendation models, aiming to improve their performance and generalization ability. In contrast, generation-based methods \cite{zhang2025collm,hua2023index,geng2022recommendation,wang2024learnable} incorporate collaborative signals into the input context and employ techniques such as supervised fine-tuning or preference alignment to predict the next item interaction, thereby moving beyond the need for explicit ID-based modeling in traditional recommendation paradigms.

\textbf{Preference Alignment in LLM-based Recommendation}.
Reinforcement learning from human feedback (RLHF) \cite{christiano2017deep} has emerged as the dominant paradigm for aligning large language models (LLMs) with human values. However, due to the challenges associated with designing complex reward functions in RLHF, researchers have proposed Direct Preference Optimization (DPO) \cite{rafailov2023direct}, which derives the optimal policy in closed form and directly optimizes LLMs using human preference data. Building upon this research foundation, researchers have developed multiple DPO variant methodologies \cite{wang2024comprehensive,richemond2024offline}. 
Since DPO explicitly models user preferences in a manner that aligns with the fundamental objective of recommendation, which is to rank preferred items higher, it has been naturally adopted in recommender systems to enhance the performance of approaches based on LLMs \cite{bai2024aligning,chen2024softmax,liao2024rosepo,gao2024sprec, deng2025onerec}.
This study further reveals the potential adverse effects of data distribution shifts on the preference alignment process in DPO and proposes a causal intervention approach to enhance the OOD generalization ability of LLM-based recommender systems.  

\section{Conclusion}
In this paper, we first reveal that during preference alignment using Direct Preference Optimization, LLM-based recommendation systems tend to amplify spurious correlations caused by environmental confounders, significantly undermining their OOD generalization. To address this issue, we propose CausalDPO, a causality-aware extension to DPO based on the principle of invariant causal learning. CausalDPO eliminates the influence of environmental confounders and enforces distributional consistency across different environments, thereby enhancing the model’s generalization to unseen distributions. Extensive experiments show that our method consistently improves generative recommendation under diverse distribution shifts.

\section*{Acknowledgements}
This work is partially supported by the National Natural Science Foundation of China under Grant No. 62576083. We extend special thanks to the National Supercomputer Center in Guangzhou for their computational support.

\section*{Impact Statement}
This paper presents work whose goal is to advance the field of machine learning. There are many potential societal consequences of our work, none of which we feel must be specifically highlighted here.
\nocite{langley00}

\bibliography{example_paper}
\bibliographystyle{icml2026}

\newpage
\appendix
\onecolumn

\section{Theoretical Proof}

\subsection{Proof of Proposition 1: DPO Amplifies Spurious Correlations and Hinders Generalization}
\label{ap: Proposition 1}

We demonstrate that optimizing the Direct Preference Optimization (DPO) objective under biased training data increases the model's reliance on spurious environmental variables $E$, thereby impairing generalization under distribution shifts. The proof proceeds by defining the DPO loss, modeling the effect of spurious correlations, analyzing the gradient dynamics, showing overfitting to spurious features, and quantifying the generalization error.

Consider the DPO scoring function defined as $r_{\theta}(x, y) = \beta \log \frac{\pi_\theta(y|x)}{\pi_{\text{ref}}(y|x)}$, where $\pi_\theta(y|x)$ is the model's conditional probability, $\pi_{\text{ref}}(y|x)$ is a reference policy, and $\beta > 0$ is a scaling factor. The DPO loss is given by:
\begin{equation}
\mathcal{L}_{\mathrm{DPO}} = - \mathbb{E}_{(x, y_w, y_l) \sim D} \left[ \log \sigma\left( r_{\theta}(x, y_w) - r_{\theta}(x, y_l) \right) \right],
\end{equation}
where $D$ is the dataset of preference tuples $(x, y_w, y_l)$, with $y_w$ preferred over $y_l$ for input $x$, and $\sigma(z) = (1 + e^{-z})^{-1}$ is the sigmoid function. This loss encourages $\pi_{\theta}(y_w|x) > \pi_{\theta}(y_l|x)$ by maximizing the difference $r_{\theta}(x, y_w) - r_{\theta}(x, y_l)$.

Assume the model's log-probability decomposes as:
\begin{equation}
\log \pi_{\theta}(y|x) = w_E \cdot f_E(y, E) + w_{\text{rest}} \cdot f_{\text{rest}}(y, X_{\text{rest}}) + b,
\end{equation}
where $x = (E, X_{\text{rest}})$, $E$ represents environmental features, and $X_{\text{rest}}$ denotes task-relevant features. The term $f_E(y, E)$ captures alignment with spurious environmental variables, while $f_{\text{rest}}(y, X_{\text{rest}})$ encodes task-relevant information, with weights $w_E$, $w_{\text{rest}}$, and bias $b$. In the biased training distribution, suppose:
\begin{equation}
p_{\text{train}}(E \mid y_w) > p_{\text{train}}(E \mid y_l),
\end{equation}
implying that $E$ is spuriously correlated with the preferred output $y_w$. This leads to:
\begin{equation}
\mathbb{E}_{D} [f_E(y_w, E) - f_E(y_l, E)] > 0,
\end{equation}
indicating that the spurious feature $f_E$ is more strongly associated with $y_w$ than $y_l$ in the training data.

To understand the effect of DPO optimization, we compute the gradient of the loss with respect to $w_E$:
\begin{equation}
\nabla_{w_E} \mathcal{L}_{\mathrm{DPO}} = -\mathbb{E}_{(x, y_w, y_l) \sim D} \left[ (1 - \sigma(r_{\theta}(x, y_w) - r_{\theta}(x, y_l))) \cdot \beta (f_E(y_w, E) - f_E(y_l, E)) \right].
\end{equation}
Since $1 - \sigma(z) > 0$ for all $z$ and $\mathbb{E}_{D} [f_E(y_w, E) - f_E(y_l, E)] > 0$, the gradient satisfies:
\begin{equation}
\nabla_{w_E} \mathcal{L}_{\mathrm{DPO}} < 0.
\end{equation}
With gradient descent updates $w_E^{(t+1)} = w_E^{(t)} - \eta \nabla_{w_E} \mathcal{L}_{\mathrm{DPO}}$ for learning rate $\eta > 0$, it follows that:
\begin{equation}
w_E^{(t+1)} > w_E^{(t)},
\end{equation}
showing that the weight $w_E$ increases over training iterations, amplifying reliance on the spurious feature $E$.

The log-odds preference score is:
\begin{equation}
\Delta_E = \log \frac{\pi_{\theta}(y_w|x)}{\pi_{\theta}(y_l|x)} = r_{\theta}(x, y_w) - r_{\theta}(x, y_l) \approx \beta w_E (f_E(y_w, E) - f_E(y_l, E)).
\end{equation}
After $T$ iterations, assuming a constant expected gradient for simplicity, we approximate:
\begin{equation}
w_E^{(T)} \approx w_E^{(0)} + \eta T \cdot \beta \cdot \mathbb{E}_{D} [f_E(y_w, E) - f_E(y_l, E)].
\end{equation}
Since $\mathbb{E}_{D} [f_E(y_w, E) - f_E(y_l, E)] > 0$, the weight $w_E^{(T)}$ grows linearly with $T$, leading to:
\begin{equation}
\Delta_E \propto T \cdot \mathbb{E}_{D} [f_E(y_w, E) - f_E(y_l, E)].
\end{equation}
Thus, DPO progressively overfits to the spurious feature $E$, as the preference score becomes increasingly dominated by $f_E$.

To assess generalization, consider a test distribution $P_{\text{test}}$ where the spurious correlation between $E$ and $y$ differs from $P_{\text{train}}$. The generalization error is approximated as:
\begin{equation}
\mathrm{GenErr}(\theta) = \mathbb{E}_{P_{\text{test}}(x)} \left| \mathbb{E}_{P_{\text{test}}(y|x)} \mathbb{E}_{P_{\text{test}}(E|x)}[w_E f_E(y, E)] - \mathbb{E}_{P_{\text{train}}(y|x, E)}[w_E f_E(y, E)] \right|.
\end{equation}
Assuming $P_{\text{test}}(y|x) \approx P_{\text{train}}(y|x)$, the error simplifies to:
\begin{equation}
\mathrm{GenErr}(\theta) \approx |w_E| \cdot \mathbb{E}_{P_{\text{test}}(x)} \left| \mathbb{E}_{P_{\text{test}}(E|x)}[f_E(y, E)] - \mathbb{E}_{P_{\text{train}}(E|x)}[f_E(y, E)] \right|.
\end{equation}
Suppose $|f_E(y, E)| \leq C$ for some constant $C$. Using the total variation distance, $\mathrm{TV}(P_1, P_2) = \frac{1}{2} \int |P_1(E) - P_2(E)| dE$, we bound:
\begin{align}
\left| \mathbb{E}_{p_{\text{test}}(E|x)}[f_E] - \mathbb{E}_{p_{\text{train}}(E|x)}[f_E] \right| &\leq \int |f_E(y, E)| \cdot |p_{\text{test}}(E|x) - p_{\text{train}}(E|x)| dE \\
&\leq 2C \cdot \mathrm{TV}(p_{\text{train}}(E|x), p_{\text{test}}(E|x)).
\end{align}
Thus, the generalization error is bounded by:
\begin{equation}
\mathrm{GenErr}(\theta) \leq 2C |w_E| \cdot \mathbb{E}_{p_{\text{test}}(x)} \left[ \mathrm{TV}(p_{\text{train}}(E|x), p_{\text{test}}(E|x)) \right].
\end{equation}
When the model increasingly relies on environmental signals---i.e., the effective dependence on $E$ becomes stronger (e.g., $|w_E|$ increases)---and the conditional environment distribution shifts across domains, $P_{\text{train}}(E\!\mid x)\neq P_{\text{test}}(E\!\mid x)$, the bound becomes looser and the generalization error can increase accordingly. In other words, if DPO training encourages the policy to exploit environment-induced spurious cues, this reliance may be amplified during optimization, leading to larger errors under distribution shifts and thus degraded generalization.

\subsection{Derivation for Equation \ref{eq: env prior}} \label{ap: derivation}

We follow the basic do-calculus rules in~\cite{wu2024graph}. Consider a causal DAG $G$ over $\{E,X,Y\}$ (Fig.~\ref{fig: Case}(c), right). 
Let $G_{\overline{X}}$ denote the mutilated graph obtained by deleting all incoming edges into $X$, and let $G_{\underline{X}}$ denote the mutilated graph obtained by deleting all outgoing edges from $X$. 
For any interventional distribution compatible with $G$, the do-calculus yields:

\textbf{(1) Action/Observation Exchange:}
\begin{equation}
P(Y \mid do(E), do(X)) = P(Y \mid do(E), X),
\quad \text{if } (E \perp\!\!\!\perp Y \mid X)_{G_{\overline{X}\underline{E}}}.
\end{equation}

\textbf{(2) Insertion/Deletion of Actions:}
\begin{equation}
P(Y \mid do(E), do(X)) = P(Y \mid do(E)),
\quad \text{if } (X \perp\!\!\!\perp Y \mid E)_{G_{\overline{E}}}.
\end{equation}

\noindent
Based on the above, Eq.~(6) follows by standard backdoor adjustment:
\begin{equation}
\label{eq:backdoor_derivation}
\begin{aligned}
p_{\theta}(Y \mid do(X))
&= \sum_{e} p_{\theta}(Y \mid do(X), E=e)\, p_{\theta}(E=e) \\
&= \sum_{e} p_{\theta}(Y \mid X, E=e)\, p_{\theta}(E=e)
\quad \text{(by Rule~1, since } (E \perp\!\!\!\perp Y \mid X)_{G_{\overline{X}}} \text{)} \\
&= \sum_{e} p_{\theta}(Y \mid X, E=e)\, p_{\theta}(E=e \mid do(X)) \\
&= \sum_{e} p_{\theta}(Y \mid X, E=e)\, p_{\theta}(E=e)
\quad \text{(since } E \text{ is not affected by } do(X)\text{, i.e., } p(E\mid do(X))=p(E)\text{)} \\
&= \mathbb{E}_{E \sim p_{\theta}(E)}\!\left[p_{\theta}(Y \mid X, E)\right].
\end{aligned}
\end{equation}

\subsubsection{Problem Setup}

We interpret $\pi_{\theta}(x, y) = p_{\theta}(y \mid x)$ as a stochastic policy in a policy gradient framework. The objective is:
\begin{equation}
J(\theta) = \mathcal{L}_{\mathrm{DPO}} + \lambda \cdot \mathrm{MMD}(p_m, p_{m'}),
\end{equation}
with policy gradient:
\begin{equation}
\nabla_{\theta} J(\theta) = \nabla_{\theta} \mathcal{L}_{\mathrm{DPO}} + \lambda \nabla_{\theta} \mathrm{MMD}(p_m, p_{m'}).
\end{equation}
The Invariance Property requires:
\begin{equation}
p_m(\pi_{\theta}(x, y_w) \mid do(x)) = p_{m'}(\pi_{\theta}(x, y_w) \mid do(x)), \quad \forall m, m' \in \mathcal{E}_{\mathrm{train}},
\end{equation}
ensuring the policy ignores spurious environmental variations. The Sufficient Condition requires that $\pi_{\theta}$ captures true causal relationships, maintaining low $\mathcal{L}_{\mathrm{DPO}}$.
\subsection{Proof of Invariant and Sufficient Preference Policy via CausalDPO} \label{ap: invariant and sufficient}

\setcounter{theorem}{1}
\begin{proposition}[Invariant and sufficient preference policy via CausalDPO.]\label{proposition: 02}
Let a preference policy model $\pi_\theta(y \mid x)$ be trained across environments $\mathcal{E}_{\text{train}} = \{ \mathcal{E}_1, \dots, \mathcal{E}_M \}$ using preference data $(x, y_w, y_l)$, where $y_w \succ y_l$ (denoting that $y_w$
  is strictly preferred over $y_l$). We show that minimizing the objective in Eq. (\ref{eq: causalDPO}) induces a policy that satisfies invariance and sufficiency conditions:
\begin{itemize}[leftmargin=*]
    \setlength{\parskip}{0.1pt} 
    \item \textbf{Invariance:} The learned policy satisfies environment-invariant behavior over the optimal intervention distribution $do(x)$:
    \begin{equation*}
        \forall m, m' \in \mathcal{E}_{\text{train}}, \quad 
        p_m\left(\pi_\theta(x, y_w) \mid do(x)\right) = 
        p_{m'}\left(\pi_\theta(x, y_w) \mid do(x)\right),
    \end{equation*}
    as $\mathrm{MMD} \to 0$, the policy distribution becomes invariant across environments.
    \item \textbf{Sufficiency:} The policy maintains \emph{sufficiency} for preference discrimination, satisfying:
    \begin{equation}
        \pi_\theta(y_w \mid x) \gg \pi_\theta(y_l \mid x) 
        \quad \text{almost surely for } (x, y_w, y_l) \sim \mathcal{D}_{\text{train}},
    \end{equation}
    where $\gg$ denotes strict dominance in log-probability. This follows from the DPO objective’s explicit optimization of the preference log-ratio:
    \begin{equation}
        \Delta r(x) = \log \left( \frac{\pi_\theta(y_w \mid x)}{\pi_\theta(y_l \mid x)} \right),
    \end{equation}
    which preserves the model’s discriminative capacity even under invariance constraints.
\end{itemize}
\end{proposition}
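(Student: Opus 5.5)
We treat the objective $J(\theta)=\mathcal{L}_{\mathrm{DPO}}(\theta)+\lambda\,\mathrm{MMD}(p_m,p_{m'})$ from the Problem Setup above. The plan is to show that at a (global) minimizer $\theta^*$ both terms are simultaneously driven to their infima, and then to read invariance off the MMD term and sufficiency off the DPO term. Concretely, we assume the policy class is rich enough to contain at least one \emph{invariant} policy $\pi^{\mathrm{inv}}$. Such a policy depends only on $X_{\text{rest}}$ in the decomposition used in the proof of Proposition~\ref{prop:dpo-spurious}, i.e.\ it sets $w_E=0$. It should attain a DPO loss arbitrarily close to the unconstrained infimum on the causal part of the signal. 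This plays the role of a realizability assumption and is standard in invariant learning.

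\textbf{Invariance.} We work with a characteristic kernel $k$, so that $\mathrm{MMD}(P,Q)=\|\mu_P-\mu_Q\|_{\mathcal{H}_k}$ is a metric on distributions. First, note that $J(\theta^*)\le J(\pi^{\mathrm{inv}})$ and that $\mathcal{L}_{\mathrm{DPO}}\ge 0$. Together these give
\[
\lambda\sum_{m,m'}\mathrm{MMD}(p_m,p_{m'})\le \mathcal{L}_{\mathrm{DPO}}(\pi^{\mathrm{inv}})-\mathcal{L}_{\mathrm{DPO}}(\theta^*)+\lambda\cdot 0 .
\]
Next we bound the gap $\mathcal{L}_{\mathrm{DPO}}(\pi^{\mathrm{inv}})-\inf\mathcal{L}_{\mathrm{DPO}}$. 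This gap is the extra fit that spurious features could buy. Letting $\lambda\to\infty$, or arguing that the gap is zero under realizability, forces every pairwise MMD to zero. Because the kernel is characteristic, $\mathrm{MMD}=0$ implies equality of the distributions. Here the distributions are those of $\pi_\theta(x,y_w)$ under the backdoor-adjusted mixture of Eq.~(\ref{eq: env prior}), built from the soft assignments $p_{ik}$. This yields the stated identity for all $m,m'$. We also record the quantitative version: MMD of order $O(1/\lambda)$ gives approximate invariance, which matches the phrase ``as $\mathrm{MMD}\to 0$''.

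\textbf{Sufficiency.} At $\theta^*$ the DPO term is also near its infimum. Since $-\log\sigma(z)$ is strictly decreasing and positive, a small loss forces $\sigma(\Delta r)\to 1$ on the support of $\mathcal{D}_{\text{train}}$. Here $\Delta r = r_\theta(x,y_w)-r_\theta(x,y_l)$. By Markov's inequality applied to $-\log\sigma(\Delta r)$, the set where $\Delta r\le c$ has probability at most $\mathcal{L}_{\mathrm{DPO}}/(-\log\sigma(c))$. Taking the loss to zero along a minimizing sequence therefore gives $\Delta r\to\infty$ almost surely. Absorbing the fixed reference ratio $\beta\log\frac{\pi_{\text{ref}}(y_w|x)}{\pi_{\text{ref}}(y_l|x)}$, which is bounded because $\pi_{\text{ref}}$ has full support, this gives $\log\pi_\theta(y_w|x)-\log\pi_\theta(y_l|x)\to\infty$, i.e.\ $\pi_\theta(y_w|x)\gg\pi_\theta(y_l|x)$ a.s. The key point is that the invariant policy achieves this using $f_{\text{rest}}$ alone. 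The MMD constraint therefore does not obstruct the DPO term, which is the sense in which discriminative capacity is ``preserved under invariance''.

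\textbf{Main obstacle.} The hard step is the compatibility argument: showing that both infima are attained \emph{simultaneously}, rather than merely trading off against each other. This is exactly the realizability assumption that some invariant policy separates $y_w$ from $y_l$ using causal features. I would first try to justify it from the SCM of Figure~\ref{fig: Case}: if the true preference mechanism $p(Y\mid do(X))$ does not depend on $E$, then the policy with $w_E=0$ and $w_{\text{rest}}$ tuned to it is invariant by construction. Its DPO loss equals the infimum attainable on $p(Y\mid do(X))$. If exact attainment fails, I would fall back to the penalty-method statement: for every $\epsilon>0$ there is a $\lambda$ such that the minimizer is $\epsilon$-invariant and $\epsilon$-sufficient.
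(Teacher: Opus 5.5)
Your argument is correct under the realizability assumption you make explicit, but it takes a different route from the paper.

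The paper argues through training dynamics. It writes out $\nabla_\theta\mathcal{L}_{\mathrm{DPO}}$ and the empirical Gaussian-kernel gradient of $\mathrm{MMD}^2$. It then asserts that a sufficiently large $\lambda$ drives $\nabla_\theta\mathrm{MMD}\to 0$, and hence, since $k$ is characteristic, gives equal mean embeddings. Finally it claims that a stationary point of $\nabla_\theta\mathcal{L}_{\mathrm{DPO}}+\lambda\nabla_\theta\mathrm{MMD}=0$, for an expressive $\pi_\theta$, has both $\mathrm{MMD}\approx 0$ and $\sigma(\Delta r)\to 1$.

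You instead compare objective values at a (near-)global minimizer against an invariant competitor, which is the standard penalty-method argument. This buys you several things:
\begin{itemize}
\item a genuine implication where the paper has only a heuristic, since a vanishing MMD gradient, or a stationary point of a nonconvex objective, implies neither $\mathrm{MMD}=0$ nor small DPO loss;
\item a rate $\mathrm{MMD}\le \mathrm{gap}/\lambda$, with exact invariance for every $\lambda>0$ when the gap is zero;
\item a quantitative bound $P(\Delta r\le c)\le\mathcal{L}_{\mathrm{DPO}}/(-\log\sigma(c))$ for sufficiency;
\item an explicit statement of the compatibility assumption that the paper leaves hidden in ``the MMD constraint preserves invariant features predictive of preferences''.
\end{itemize}
The paper's route, in exchange, is tied to the update rule actually used in training, whereas yours says nothing about what gradient descent reaches.

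Three points to tighten:
\begin{itemize}
\item An $E$-blind policy ($w_E=0$) need not have zero MMD. The pseudo-environments are clusters of the model's own representation of $x$, so they also separate $X_{\text{rest}}$. You should assume directly that some policy has zero cross-cluster MMD and near-optimal DPO loss.
\item Markov's inequality yields $\Delta r\to\infty$ only in probability. This is almost sure only on a finite training set or along a subsequence.
\item Your fallback (``$\epsilon$-invariant and $\epsilon$-sufficient for a suitable $\lambda$'') still needs approximate realizability. If the best invariant policy's DPO loss stays bounded away from zero, no choice of $\lambda$ yields sufficiency.
\end{itemize}
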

\subsubsection{Policy Gradient Derivation}

\textbf{DPO Gradient.} The DPO loss is:
\begin{equation}
\mathcal{L}_{\mathrm{DPO}} = - \mathbb{E}_{(x, y_w, y_l) \sim \mathcal{O}} \left[ \log \sigma \left( \Delta r \right) \right],
\end{equation}
where:
\begin{equation}
\Delta r = r_{\theta}(x, y_w) - r_{\theta}(x, y_l) = \beta \log \frac{\pi_{\theta}(x, y_w)}{\pi_{\theta}(x, y_l)} + \beta \log \frac{\pi_{\mathrm{ref}}(x, y_l)}{\pi_{\mathrm{ref}}(x, y_w)}.
\end{equation}
The gradient is:
\begin{equation}
\nabla_{\theta} \mathcal{L}_{\mathrm{DPO}} = - \mathbb{E}_{(x, y_w, y_l)} \left[ (1 - \sigma(\Delta r)) \nabla_{\theta} \Delta r \right],
\end{equation}
since $\nabla_{\theta} \log \sigma(z) = (1 - \sigma(z)) \nabla_{\theta} z$. Compute:
\begin{equation}
\nabla_{\theta} \Delta r = \beta \left( \nabla_{\theta} \log \pi_{\theta}(x, y_w) - \nabla_{\theta} \log \pi_{\theta}(x, y_l) \right) = \beta \left( \frac{\nabla_{\theta} \pi_{\theta}(x, y_w)}{\pi_{\theta}(x, y_w)} - \frac{\nabla_{\theta} \pi_{\theta}(x, y_l)}{\pi_{\theta}(x, y_l)} \right).
\end{equation}
Thus:
\begin{equation}
\nabla_{\theta} \mathcal{L}_{\mathrm{DPO}} = - \beta \mathbb{E}_{(x, y_w, y_l)} \left[ (1 - \sigma(\Delta r)) \left( \frac{\nabla_{\theta} \pi_{\theta}(x, y_w)}{\pi_{\theta}(x, y_w)} - \frac{\nabla_{\theta} \pi_{\theta}(x, y_l)}{\pi_{\theta}(x, y_l)} \right) \right].
\end{equation}
This gradient increases $\pi_{\theta}(x, y_w)$ and decreases $\pi_{\theta}(x, y_l)$, aligning the policy with preferences.

\textbf{MMD Gradient.} The MMD term is:
\begin{equation}
\mathrm{MMD}(p_m, p_{m'}) = \sum_{m, m' \in \mathcal{E}_{\mathrm{train}}} \mathrm{MMD}(p_m, p_{m'}),
\end{equation}
where:
\begin{equation}
\mathrm{MMD}^2(p_m, p_{m'}) = \mathbb{E}_{p_m, p_m} [k(z, z')] + \mathbb{E}_{p_{m'}, p_{m'}} [k(z, z')] - 2 \mathbb{E}_{p_m, p_{m'}} [k(z, z')],
\end{equation}
with $z = \pi_{\theta}(x, y_w) \mid do(x)$, and $k$ is a characteristic kernel (e.g., Gaussian). The empirical gradient is:
\begin{equation}
\begin{aligned}
\nabla_{\theta}\,\mathrm{MMD}^2
\;&\approx\;
\sum_{m,m'} \Bigg[
\frac{1}{N_m^2}\sum_{i=1}^{N_m}\sum_{j=1}^{N_m}
\frac{\partial k(z_i^m, z_j^m)}{\partial z_i^m}\,\nabla_{\theta} z_i^m
+
\frac{1}{N_{m'}^2}\sum_{i=1}^{N_{m'}}\sum_{j=1}^{N_{m'}}
\frac{\partial k(z_i^{m'}, z_j^{m'})}{\partial z_i^{m'}}\,\nabla_{\theta} z_i^{m'}
\\
&\qquad\qquad
-\frac{2}{N_m N_{m'}}\sum_{i=1}^{N_m}\sum_{j=1}^{N_{m'}}
\Big(
\frac{\partial k(z_i^m, z_j^{m'})}{\partial z_i^m}\,\nabla_{\theta} z_i^m
+
\frac{\partial k(z_i^m, z_j^{m'})}{\partial z_j^{m'}}\,\nabla_{\theta} z_j^{m'}
\Big)
\Bigg].
\end{aligned}
\end{equation}
where $z_i^m = \pi_{\theta}(x, y_w)$. For a Gaussian kernel, $\frac{\partial k(z, z')}{\partial z} = -\frac{1}{\sigma^2} k(z, z') (z - z')$, aligning distributions $p_m$ and $p_{m'}$.

\subsubsection{Invariance Property}

The Invariance Property requires $p_m(\pi_{\theta}(x, y_w) \mid do(x)) = p_{m'}(\pi_{\theta}(x, y_w) \mid do(x))$. The MMD term minimizes:
\begin{equation}
\sum_{m,m'} \left\| \mathbb{E}_{p_m} [\phi(\pi_{\theta})] - \mathbb{E}_{p_{m'}} [\phi(\pi_{\theta})] \right\|_{\mathcal{H}}^2.
\end{equation}
Since $k$ is characteristic, $\mathrm{MMD}(p_m, p_{m'}) = 0$ if and only if $p_m = p_{m'}$. The gradient $\nabla_{\theta} \mathrm{MMD}$ adjusts $\pi_{\theta}$ to reduce distributional discrepancies. For gradient descent:
\begin{equation}
\theta_{t+1} = \theta_t - \eta \left( \nabla_{\theta} \mathcal{L}_{\mathrm{DPO}} + \lambda \nabla_{\theta} \mathrm{MMD} \right),
\end{equation}
a sufficiently large $\lambda$ ensures $\nabla_{\theta} \mathrm{MMD} \to 0$, implying:
\begin{equation}
\mathbb{E}_{p_m} [\phi(\pi_{\theta})] = \mathbb{E}_{p_{m'}} [\phi(\pi_{\theta})] \implies p_m = p_{m'}.
\end{equation}
The intervention $do(x)$ isolates causal effects, ensuring $\pi_{\theta}$ depends only on $x$, satisfying the Invariance Property.

\subsubsection{Sufficient Condition}

The Sufficient Condition requires low $\mathcal{L}_{\mathrm{DPO}}$, ensuring preference alignment. The DPO gradient maximizes:
\begin{equation}
\log \frac{\pi_{\theta}(x, y_w)}{\pi_{\theta}(x, y_l)},
\end{equation}
derived from the Bradley-Terry model. For an expressive $\pi_{\theta}$, minimizing $\mathcal{L}_{\mathrm{DPO}}$ achieves:
\begin{equation}
\sigma(\Delta r) \to 1 \implies r_{\theta}(x, y_w) - r_{\theta}(x, y_l) \to \infty.
\end{equation}
The total gradient balances:
\begin{equation}
\nabla_{\theta} \mathcal{L}_{\mathrm{DPO}} + \lambda \nabla_{\theta} \mathrm{MMD} = 0.
\end{equation}
A tuned $\lambda$ ensures $\mathrm{MMD} \approx 0$ while $\mathcal{L}_{\mathrm{DPO}}$ remains low, as the MMD constraint preserves invariant features predictive of preferences. Convergence to a stationary point satisfies both conditions, assuming bounded gradients and sufficient data.

By jointly optimizing preference fidelity and environment invariance, the proposed objective ensures that the learned policy both (i) generalizes across training environments and (ii) captures causal user-item preference signals. Gradual dynamics naturally balance these two objectives during training, ensuring convergence with a robust and discriminative policy.

\subsection{Proof of Generalization Bound for CausalDPO} \label{ap: generalization causaldpo}
\setcounter{theorem}{2}
\begin{proposition}[Generalization Bound for CausalDPO.] \label{proposition: 03}
Let  $\pi_\theta \in \Pi$ denote the learned policy parameterized by $\theta$, where $\Pi$ is the policy class. Let $p(e)$ and $q(e)$ denote the test and training environment distributions, respectively, over the environment space $\mathcal{E}$. Assume the function $f: \mathcal{E} \to \mathbb{R}$ belongs to a reproducing kernel Hilbert space (RKHS) $\mathcal{H}_k$ with kernel $k$, and satisfies $\|f\|_{\mathcal{H}_k} \leq C$ for some $C > 0$. Let $\ell: \mathbb{R} \times \mathcal{E} \to \mathbb{R}$ be a loss function that is $L_\ell$-Lipschitz in its first argument and bounded by $B > 0$. Then, with probability at least $1 - \delta$ over the draw of $n$ training environments, the generalization error for the risk $R_e(\pi_\theta) = \mathbb{E}_{a \sim \pi_\theta}[\ell(f(e, a), e)]$ satisfies:
\begin{equation}
 \sup_{e \in \mathcal{E}_{\text{train}}} \left| R_e(\pi_\theta) - \hat{R}_e(\pi_\theta) \right| \leq 2 \mathcal{R}_n(\Pi) + B \sqrt{\frac{2 \log(2/\delta)}{n}} + L_\ell \cdot \text{MMD}_{\mathcal{H}_k}(p, q),   
\end{equation}
where $\hat{R}_e(\pi_\theta)$ is the empirical risk over $n$ training environments $e_i \sim q(e)$.
 $\mathcal{R}_n(\Pi)$ is the empirical Rademacher complexity of the policy class $\Pi$ over $n$ samples.
 $\text{MMD}_{\mathcal{H}_k}(p, q) = \sup_{\|f\|_{\mathcal{H}_k} \leq 1} \left| \mathbb{E}_{e \sim p}[f(e)] - \mathbb{E}_{e \sim q}[f(e)] \right|$ is the maximum mean discrepancy between $p$ and $q$ in $\mathcal{H}_k$.
\end{proposition}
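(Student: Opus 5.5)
The plan is to split the gap between the true risk and the empirical risk into two pieces. The first is a sampling term, controlled by uniform convergence over $\Pi$. The second is a shift term between the test environment distribution $p$ and the training distribution $q$, controlled by the MMD.

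Concretely, I would write the population risk under $q$ as $R_q(\pi)=\mathbb{E}_{e\sim q}[R_e(\pi)]$, and define $R_p(\pi)$ analogously. The empirical risk is $\hat R(\pi)=\frac1n\sum_{i=1}^n R_{e_i}(\pi)$ with $e_i\sim q$. The quantity on the left of the statement is then bounded via the triangle inequality:
\begin{equation*}
\bigl|R_p(\pi_\theta)-\hat R(\pi_\theta)\bigr| \le \bigl|R_p(\pi_\theta)-R_q(\pi_\theta)\bigr| + \sup_{\pi\in\Pi}\bigl|R_q(\pi)-\hat R(\pi)\bigr|.
\end{equation*}
Since $\pi_\theta$ is data-dependent, this inequality is how I interpret the supremum over $e$.

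The second term is handled by the standard Rademacher argument. Each loss $\ell(\cdot,e)\in[0,B]$ (after shifting if necessary), so McDiarmid's inequality shows that the supremum concentrates around its expectation within $B\sqrt{\log(2/\delta)/(2n)}$. Symmetrization then bounds the expectation by $2\mathcal{R}_n(\ell\circ\Pi)$. Using the two-sided version with $\delta/2$ per side and a loose constant gives the $B\sqrt{2\log(2/\delta)/n}$ term. To pass from $\ell\circ\Pi$ to $\Pi$ I would invoke Talagrand's contraction lemma via the $L_\ell$-Lipschitz property. Strictly, this yields $2L_\ell\mathcal{R}_n(\Pi)$, so I would either absorb $L_\ell$ into the definition of $\mathcal{R}_n(\Pi)$ (the loss-composed class) or normalise $L_\ell\le1$ here.

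The first term is the distribution-shift step. Let $g_\pi(e)=\mathbb{E}_{a\sim\pi}[\ell(f(e,a),e)]$. Then $|R_p-R_q|=|\mathbb{E}_p g_\pi-\mathbb{E}_q g_\pi|$. If $g_\pi\in\mathcal{H}_k$, the reproducing property gives
\begin{equation*}
|\mathbb{E}_p g_\pi-\mathbb{E}_q g_\pi|=|\langle g_\pi,\mu_p-\mu_q\rangle|\le\|g_\pi\|_{\mathcal{H}_k}\,\mathrm{MMD}_{\mathcal{H}_k}(p,q).
\end{equation*}
So the whole shift term reduces to showing $\|g_\pi\|_{\mathcal{H}_k}\le L_\ell$, possibly up to the factor $C$.

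This RKHS-membership step is the main obstacle. Lipschitz composition does not in general preserve RKHS membership or norm, so the bound $\|g_\pi\|\le L_\ell$ is not automatic. My first attempt would be to add, as an explicit regularity assumption implicit in the statement, that the map $e\mapsto\ell(f(e,a),e)$ lies in the ball of $\mathcal{H}_k$ of radius $L_\ell$ uniformly in $a$. Averaging over $a\sim\pi$ then keeps $g_\pi$ in that ball by convexity of the ball and closedness of $\mathcal{H}_k$ under Bochner integrals.

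As a fallback, the kernel could be chosen so that $\mathcal{H}_k$-balls contain the relevant Lipschitz functions; for instance, a linear loss in its first argument gives $g_\pi = L_\ell\cdot(\text{affine in } f)$, so the norm is at most $L_\ell C$ and $C$ can be normalised to 1. Combining the three pieces then yields the stated bound.
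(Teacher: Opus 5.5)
Your outline follows the paper's skeleton. It uses the same split of $|\mathbb{E}_p g_\pi-\frac1n\sum_i g_\pi(e_i)|$ into a shift term $|\mathbb{E}_p g_\pi-\mathbb{E}_q g_\pi|$ and a uniform sampling term over $\{g_\pi:\pi\in\Pi\}$. It reinterprets the ill-typed left-hand side the same way, and it bounds the sampling term with the same Rademacher-plus-concentration argument.

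The real difference is the shift step. The paper proves $|f(e,a)-f(e',a)|\le 2C\,\|k(e,\cdot)-k(e',\cdot)\|_{\mathcal{H}_k}$ and combines it with Lipschitzness of $\ell$. It then asserts that $g_\pi$ ``inherits smoothness in $\mathcal{H}_k$'' and concludes $|\mathbb{E}_p g_\pi-\mathbb{E}_q g_\pi|\le L_\ell\,\mathrm{MMD}_{\mathcal{H}_k}(p,q)$. That inference does not hold, for three reasons:
\begin{itemize}
\item Even when $\ell$ ignores its second argument, Lipschitzness with respect to $d_k(e,e')=\|k(e,\cdot)-k(e',\cdot)\|_{\mathcal{H}_k}$ only gives a bound of $L_\ell C$ times the Wasserstein-1 distance under $d_k$. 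Every element of the RKHS unit ball is $1$-Lipschitz for $d_k$, so that Wasserstein distance dominates MMD rather than being dominated by it.
\item The argument never controls the $e$-dependence through the second argument of $\ell$.
\item The constant $C$ disappears without explanation.
\end{itemize}

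Your route is the one that actually produces the stated term: RKHS duality $|\langle g_\pi,\mu_p-\mu_q\rangle|\le\|g_\pi\|_{\mathcal{H}_k}\,\mathrm{MMD}_{\mathcal{H}_k}(p,q)$, under the explicit hypothesis that $e\mapsto\ell(f(e,a),e)$ lies in the radius-$L_\ell$ ball of $\mathcal{H}_k$ uniformly in $a$. The added hypothesis is a necessary repair of the statement, not a gap in your proof. To see this, take:
\begin{itemize}
\item $\Pi$ a singleton and $f\equiv 0$;
\item $q$ and $p$ the point masses at $e_0\neq e_1$;
\item $\ell(t,e)=B\,\mathbf{1}[e=e_1]$, which is $L_\ell$-Lipschitz in $t$ for every $L_\ell>0$.
\end{itemize}
Then the left side equals $B$ surely and the Rademacher term vanishes. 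For a kernel bounded by $1$, the right side is at most $B\sqrt{2\log(2/\delta)/n}+2L_\ell$, which is below $B$ once $n$ is large and $L_\ell<B/4$.

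There are also smaller corrections to your sampling step.
\begin{itemize}
\item Talagrand's contraction lemma does not apply as you describe. The policy $\pi$ enters $g_\pi$ as a mixing measure over $a$, not through the first argument of $\ell$, so there is no real-valued base class to contract from. Your alternative, letting $\mathcal{R}_n(\Pi)$ denote the complexity of $\{g_\pi:\pi\in\Pi\}$, is the right reading and is exactly what the paper does.
\item With $|\ell|\le B$ the bounded differences are $2B/n$: shifting $[-B,B]$ gives range $2B$, not $B$. Two one-sided McDiarmid bounds at level $\delta/2$ then give exactly $B\sqrt{2\log(2/\delta)/n}$, with no loosening needed.
\item Symmetrization delivers the \emph{expected} Rademacher complexity. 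Replacing it by the empirical one named in the statement costs a further concentration step and a larger constant; the paper skips this too.
\item The shift term is deterministic, so the whole confidence budget can go to the sampling term.
\end{itemize}
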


\begin{proof}
We decompose the generalization error as:
\[
\left| R_e(\pi) - \hat{R}_e(\pi) \right| = \left| \mathbb{E}_{e \sim p}[g_\pi(e)] - \frac{1}{n} \sum_{i=1}^n g_\pi(e_i) \right|,
\]
where $g_\pi(e) := \mathbb{E}_{a \sim \pi}[\ell(f(e, a), e)]$. This can be split into:
\[
\left| \mathbb{E}_{p}[g_\pi(e)] - \mathbb{E}_{q}[g_\pi(e)] \right| + \left| \mathbb{E}_{q}[g_\pi(e)] - \frac{1}{n} \sum_{i=1}^n g_\pi(e_i) \right|.
\]

\paragraph{(I) Bounding the Empirical Estimation Term.}
Let $\mathcal{F} = \{ g_\pi : \pi \in \Pi \}$. Since $|\ell| \leq B$, we have $|g_\pi(e)| \leq B$. Using Rademacher complexity bounds for real-valued function classes, with probability at least $1 - \delta/2$:
\[
\sup_{\pi \in \Pi} \left| \mathbb{E}_{q}[g_\pi(e)] - \frac{1}{n} \sum_{i=1}^n g_\pi(e_i) \right| \leq 2 \mathcal{R}_n(\Pi) + B \sqrt{\frac{2 \log(2/\delta)}{n}}.
\]

\paragraph{(II) Bounding the Distribution Shift Term.}
For each fixed $a$, $f(e, a) \in \mathcal{H}_k$, and by the reproducing property:
\[
f(e, a) = \langle f_a, k(e, \cdot) \rangle_{\mathcal{H}_k}, \quad \text{with } \|f_a\|_{\mathcal{H}_k} \leq C.
\]
By the $L_\ell$-Lipschitz property of $\ell$:
\[
|g_\pi(e) - g_\pi(e')| \leq L_\ell \cdot \mathbb{E}_{a \sim \pi} |f(e, a) - f(e', a)|.
\]
Using the RKHS norm:
\[
|f(e, a) - f(e', a)| \leq \|f\|_{\mathcal{H}_k} \cdot \|k(e, \cdot) - k(e', \cdot)\|_{\mathcal{H}_k} \leq 2C \cdot \sqrt{k(e, e) + k(e', e') - 2k(e, e')}.
\]
Thus, $g_\pi$ inherits smoothness in $\mathcal{H}_k$, and:
\[
\left| \mathbb{E}_{p}[g_\pi(e)] - \mathbb{E}_{q}[g_\pi(e)] \right| \leq L_\ell \cdot \mathrm{MMD}_{\mathcal{H}_k}(p, q).
\]

\paragraph{(III) Final Bound.}
Taking a union bound over the two terms, we conclude that with probability at least $1 - \delta$:
\[
\left| R_e(\pi) - \hat{R}_e(\pi) \right| \leq 2 \mathcal{R}_n(\Pi) + B \sqrt{\frac{2 \log(2/\delta)}{n}} + L_\ell \cdot \mathrm{MMD}_{\mathcal{H}_k}(p, q).
\]
\end{proof}

\clearpage
\section{Experimental Settings and Supplementary Experiments}
\begin{table}[ht]
    \centering
    \caption{Statistics of datasets.}
    \begin{tabular}{lccc}
        \toprule
        Dataset & Movielens-10M & Yelp2018 & Book-Crossing \\
        \midrule
        \#Sequence & 71,567 & 31,668 & 278,858\\
        \#Items & 10,681 & 38,048 & 271,379 \\
        \#Interactions & 10,000,054 & 1,561,406 & 1,149,780 \\
        \bottomrule
    \label{datasets: info}
    \vspace{-0.5cm}
    \end{tabular}
\end{table}
\subsection{Datasets}\label{ap: datasets}
Table \ref{datasets: info} presents the statistical overview of all datasets, specifying the sequential dimensions (user counts), item quantities, and interaction records. A concise description of each dataset follows:
\begin{itemize}
    \item \textbf{Movielens-10M}. This is a classic movie rating dataset released by the GroupLens team, containing approximately 10 million user-movie rating records (1-5 stars), covering around 71,000 users and 10,000 movies.
    \item \textbf{Yelp2018}. This dataset originates from the business and user interaction data released by the Yelp platform in 2018. It is primarily used for research in recommendation systems, social network analysis, and text mining, containing user ratings and reviews for businesses (such as restaurants and bars) as well as social relationship information.
    \item \textbf{Book-Crossing}. The Book-Crossing dataset is a publicly available book rating dataset collected and organized by Cai-Nicolas Ziegler in 2004 from the BookCrossing.com community. It contains users' explicit ratings (1-10 points) and implicit feedback for books, primarily used for research in recommendation systems, collaborative filtering algorithms, and user behavior analysis
\end{itemize}
It is worth noting that we retain only users with at least 20 interactions in the ML-10M and Book-Crossing datasets, and users with at least 40 interactions in the Yelp2018 dataset. For ML-10M and Yelp2018, interactions with ratings greater than 4 are considered positive samples, while for Book-Crossing, interactions with ratings above 7 are treated as positive samples.
We will process the above dataset to construct four common types of OOD.
\begin{itemize}
    \item \textbf{Popularity shift}. We randomly sample 20\% of interactions to create the out-of-distribution test set, maintaining a balanced distribution of item popularity. The rest of the data is divided into training, validation, and in-distribution (IID) test sets with a 7:1:2 ratio. This distribution shift is applied to the Yelp2018 datasets.
    \item \textbf{Temporal shift}. To simulate a realistic distribution shift, we organize the dataset by reverse chronological order and extract the latest 20\%  of each user’s interactions as the OOD test set. The remaining portion is partitioned into training (70\%), validation (10\%), and IID test (20\%) sets. This methodology is implemented on the Movielens-10M dataset.
    \item \textbf{Exposure shift}. Following the same data processing approach, we first sort the data chronologically and designate the most recent 20\% of interactions as the OOD test set. The remaining data is split into training, validation, and IID test sets in a 7:1:2 ratio. It should be noted that exposure shift requires the test set to consist of fully exposed data, which is difficult to obtain in practice. Therefore, we directly adopt the approach from prior works \cite{bonner2018causal,saito2020doubly,wei2021model}, using a matrix factorization scheme to impute missing ratings in the test set, simulating a fully exposed scenario.
    \item \textbf{Mixed shift}. We construct a mixed-shift test set on Yelp2018 by combining two \emph{in-domain} OOD subsets to simulate multiple realistic shifts. Specifically, we build (i) a \emph{popularity-shift} OOD test split and (ii) a \emph{temporal-shift} OOD test split, where test interactions come from a later (future) time window that does not overlap with the training period. We then form the mixed-shift test distribution by sampling 80\% interactions from (i) and 20\% from (ii), yielding a test set that simultaneously exhibits popularity and temporal shifts without introducing any out-of-dataset noise.

\end{itemize}

\subsection{Baselines}\label{ap: baseliens}
We evaluate CausalDPO against both conventional recommendation approaches and LLM-based benchmarks to demonstrate our method's effectiveness. Specifically, among traditional recommendation methods, we include the following models
\begin{itemize}
    \item SASRec \cite{kang2018self} is a commonly adopted baseline framework that utilizes sequential modeling enhanced by self-attention mechanisms.
    \item BigRec \cite{bao2025bi} acts as a prompt-optimized foundation model for sequence-aware recommendations.
    \item RW \cite{jiang2024item} provides a solution to item-side fairness issues in LLM-based recommendation caused by the combined effects of user historical behavior biases and inherent LLM semantic biases.
    \item $D^3$ \cite{bao2024decoding}. The proposed method pioneers a dual-debiasing mechanism to resolve both score amplification bias and recommendation homogeneity in LLM-based recommendations, caused by incompatible decoding mechanisms.
    \item DMPO \cite{bai2024aligning}  employs a pairwise ranking loss to enhance LLM-based recommendation performance by simultaneously maximizing the probability of positive samples and minimizing the probabilities of multiple negative samples.
    \item SDPO \cite{chen2024softmax} enhances recommendation accuracy and ranking capability by integrating the Plackett-Luce ranking model with softmax sampling (i.e., constructing multiple negative samples), enabling automatic positive-negative differentiation and hard negative mining during fine-tuning.
    \item RosePO \cite{liao2024rosepo} improves generative recommendation accuracy while reducing semantic hallucination and popularity bias by explicitly modeling comparative user preference relationships (e.g., chosen vs. rejected items) and incorporating personalized smoothing optimization mechanisms.
    \item SPRec \cite{gao2024sprec}. The core approach of SPRec combines supervised fine-tuning and direct preference optimization through a self-play iterative framework, dynamically generating adversarial negative samples (predictions from previous iterations) to mitigate LLM-based
    recommendation bias.
\end{itemize}
\subsection{Implementation Details} \label{ap: imply}
We employ Llama-3.1-8B as the base language model. To ensure experimental fairness, all baseline methods and CausalDPO utilize the same supervised fine-tuning dataset. In implementation, each language model-based method undergoes 3 complete training epochs. It is noteworthy that in the main experimental results, all DPO experiments use only a single negative sample, where CausalDPO follows the approach in \cite{chen2024softmax} by generating negative samples through random sampling. SPRec \cite{gao2024sprec} employs single-iteration SFT model outputs to build its negative samples. The traditional SASRec model uses the same training and validation sets as other LLM-based methods, with the embedding dimension fixed at 32 and the dropout rate set to 0.1. All experiments were conducted on a computing node equipped with 8 NVIDIA A800 GPUs (80GB VRAM per GPU). Detailed hyperparameter configurations can be found in the project code repository.

\renewcommand{\arraystretch}{1.1}
\begin{table}[htbp]
\centering
\setlength{\tabcolsep}{4pt}  
\caption{The performance (\%) comparison between the baselines and CausalDPO on the two datasets with three \textbf{IID distributions}. We highlight the methods with the \textcolor{darkred}{\textbf{best}} and \textcolor{blue!40}{\textbf{second-best}} performances.}
\label{tab:IID_res}
\begin{tabular}{c|cccc|cccc}
\toprule
\multirow{2}{*}{\textbf{Methods}} 
& \multicolumn{4}{c|}{\textbf{ML-10M}} 
& \multicolumn{4}{c}{\textbf{Book-Crossing}} 
 \\
\cmidrule(lr){2-5} \cmidrule(lr){6-9} 
& \textbf{H@10} & \textbf{N@10} & \textbf{H@20} & \textbf{N@20} 
& \textbf{H@10} & \textbf{N@10} & \textbf{H@20} & \textbf{N@20} \\
\midrule
SASRec & 0.47 & 0.17 & 0.75 & 0.27  & \textcolor{blue!40}{\textbf{1.70}} 	
 & \textcolor{blue!40}{\textbf{0.90}}  & 	\textcolor{blue!40}{\textbf{3.12}} & \textcolor{blue!40}{\textbf{1.02}}  \\
 \hline
BIGRec & 0.66 & 0.31  & 1.00  & 0.40  & 1.64 	
 &  0.73&  	2.19   &	0.87  \\
RW &  \textcolor{blue!40}{\textbf{0.77}} & \textcolor{blue!40}{\textbf{0.45}} & \textcolor{blue!40}{\textbf{1.03}} & \textcolor{blue!40}{\textbf{0.45}} & 1.52 & 0.81  &  1.89  & 0.95  \\
$D^3$ & 0.72 & 0.41 & 0.99  & 0.40  & 1.48 	
& 0.71 &  2.03  & 0.82  \\
\hline
DMPO & 0.63 &  0.33 & 0.97 & 0.42  &  1.61 	
& 0.79 & 2.25  & 0.91     \\
SDPO & 0.60 & 0.36 &  0.94 & \textcolor{blue!40}{\textbf{0.45}}  & 1.59 	
 & 0.75 	 & 2.17   &  0.88 \\
RosePO & 0.58 & 0.29 & 0.89 & 0.38 & 1.60 & 0.77 &  	2.00  & 	0.91  \\
SPRec & 0.60 & 0.31 & 0.95 & 0.41  & 1.52 	
&  0.70 & 2.31 & 0.90   \\
Ours & \textcolor{darkred}{\textbf{0.79}} &	\textcolor{darkred}{\textbf{0.48}} & \textcolor{darkred}{\textbf{1.12}} & \textcolor{darkred}{\textbf{0.51}} & \textcolor{darkred}{\textbf{1.77}} &	\textcolor{darkred}{\textbf{0.99}} &	\textcolor{darkred}{\textbf{3.28}} & \textcolor{darkred}{\textbf{1.18}} 
\\
\bottomrule
\end{tabular}
\vspace{-0.5cm}
\end{table}


\renewcommand{\arraystretch}{1.1}
\begin{table}[htbp]
\centering
\setlength{\tabcolsep}{3pt}
\caption{Model Performance (\%) Comparison Across Two Distribution Shift Types. We highlight the methods with the \textcolor{darkred}{\textbf{best}} and \textcolor{blue!40}{\textbf{second-best}} performances.}
\label{tab: mixed shifts}
\begin{tabular}{c|ccccccccc}
\toprule
\textbf{Metric} & \textbf{SASRec} & \textbf{BIGRec} & \textbf{RW} & \textbf{$D^3$} & \textbf{DMPO} & \textbf{SDPO} & \textbf{RosePO} & \textbf{SPRec} & \textbf{Ours} \\
\midrule
H@10 & 0.48 & 0.49 & 0.25 & 0.45 &0.50 &  \textcolor{blue!40}{\textbf{0.51}} & 0.39 & 0.41 & \textcolor{darkred}{\textbf{0.53}} \\
N@10 & 0.20 & \textcolor{blue!40}{\textbf{0.22}} & 0.08 & 0.18 & 0.14 & 0.15  & 0.19 & 0.20 & \textcolor{darkred}{\textbf{0.24}} \\
H@20 & 0.72 & \textcolor{blue!40}{\textbf{0.76}} & 0.51 & 0.68 & 0.75 & \textcolor{blue!40}{\textbf{0.76}}  & 0.70 & 0.75 & \textcolor{darkred}{\textbf{1.51}}\\
N@20 & 0.25 & \textcolor{blue!40}{\textbf{0.28}} & 0.15 & 0.23 & 0.20 & 0.21  & 0.26 & \textcolor{blue!40}{\textbf{0.28}} & \textcolor{darkred}{\textbf{0.47}} \\
\hline
\end{tabular}
\end{table}

\begin{table}[t]
\centering
\setlength{\tabcolsep}{4pt}
\renewcommand{\arraystretch}{1.2}
\caption{Comparison of models with and without causal regularization on ML-10M and Book.}
\label{tab: causal_reg_comparison}
\begin{tabular}{c|cccc|cccc}
\toprule
\multirow{2}{*}{\textbf{Methods}} 
& \multicolumn{4}{c|}{\textbf{ML-10M}} 
& \multicolumn{4}{c}{\textbf{Book-Crossing}} \\
\cmidrule(lr){2-5} \cmidrule(lr){6-9}
& H@10 & N@10 & H@20 & N@20 
& H@10 & N@10 & H@20 & N@20 \\
\midrule
Dr.DPO & 0.17 & 0.20 & 0.55 & 0.18 & 0.39 & 0.19 & 0.50 & 0.26 \\
CPO & 0.41 & 0.25 & 0.81 & 0.31 & 0.34 & 0.20 & 0.54 & 0.24 \\
SimPO & 0.29 & 0.21 & 0.58 & 0.23 & 0.39 & 0.18 & 0.61 & 0.24 \\
DPO & 0.43 & 0.20 & 0.81 & 0.29 & 0.19 & 0.10 & 0.39 & 0.20 \\
\midrule
SimPO w/ Causal Reg. & 0.36 & 0.27 & 0.64 & 0.34 & 0.45 & 0.03 & 0.70 & 0.30 \\
CPO w/ Causal Reg.   & 0.52 & 0.31 & 0.95 & 0.39 & 0.47 & 0.33 & 0.73 & 0.32 \\
Dr.DPO w/ Causal Reg. & \textcolor{blue!40}{\textbf{0.58}} & \textcolor{blue!40}{\textbf{0.33}} & \textcolor{blue!40}{\textbf{0.98}} & \textcolor{blue!40}{\textbf{0.41}} & \textcolor{blue!40}{\textbf{0.49}} & \textcolor{blue!40}{\textbf{0.35}} & \textcolor{blue!40}{\textbf{0.75}} & \textcolor{blue!40}{\textbf{0.34}} \\
\midrule
\textbf{CausalDPO}  & \textcolor{darkred}{\textbf{0.61}} & \textcolor{darkred}{\textbf{0.35}} & \textcolor{darkred}{\textbf{1.02}} & \textcolor{darkred}{\textbf{0.45}} 
                  & \textcolor{darkred}{\textbf{0.64}} & \textcolor{darkred}{\textbf{0.37}} & \textcolor{darkred}{\textbf{1.08}} & \textcolor{darkred}{\textbf{0.48}} \\
\bottomrule
\end{tabular}
\end{table}

\begin{table}[htbp]
\centering
\vspace{-0.3cm}
\caption{Computational Complexity Comparison of Models}
\setlength{\tabcolsep}{2pt}  
\begin{tabular}{cccccc}
\toprule
Methods & DPO & CPO  & SimPO & CausalDPO \\
\midrule
Complexity & $\mathcal{O}(3B \cdot L^2 \cdot d)$ & $\mathcal{O}(3B \cdot L^2 \cdot d + BnLd)$ & $\mathcal{O}(2B \cdot L^2 \cdot d)$& $\mathcal{O}(B \cdot L^2 \cdot d + B^2)$\\
\hline
\textbf{seconds / epoch} & 2482s & 2160s  & 1445s & 2971s\\
\bottomrule
\label{tab: time complixity}
\end{tabular}
\end{table}

\begin{table}[h]
\centering
\caption{Computational Complexity Comparison of Models with Causal Invariant Regularization}
\renewcommand{\arraystretch}{1.3}
\begin{tabular}{ccccc}
\hline
\textbf{Methods} & \textbf{Dr.DPO w/ Causal} &  \textbf{CPO w/ Causal} & \textbf{SimPO w/ Causal} & \textbf{CausalDPO} \\
\hline
\textbf{seconds / epoch} & 2504s  & 2659s & 2280s  & 2971s \\
\hline
\label{tab: time complixity_v2}
\end{tabular}
\vspace{-0.3cm}
\end{table}

\begin{table}[t]
\centering
\renewcommand{\arraystretch}{1.2}
\caption{Ablation study results (\%).}
\label{tab: ablation}
\begin{tabular}{c|cc|cc|cc}
\toprule
\multirow{2}{*}{Variant}
& \multicolumn{2}{c|}{\textbf{Yelp2018}} 
& \multicolumn{2}{c|}{\textbf{ML-10M}} 
& \multicolumn{2}{c}{\textbf{Book-Crossing}} 
\\
\cmidrule(lr){2-3} \cmidrule(lr){4-5} \cmidrule(lr){6-7} 
& H@10 & N@10 & H@10 & N@10 & H@10 & N@10  \\
\midrule
w/o SFT   & 1.28 & 0.54& 0.52 & 0.26 & 0.25 & 0.12 \\
w/o CausalDPO   & 1.47 & 0.69 & 0.55 & 0.28 & 0.52 & 0.30 \\
w/o Both    & 1.32 & 0.37 & 0.46 & 0.22 & 0.27 & 0.12 \\
\midrule
\textbf{Ours} 
          & \textcolor{darkred}{\textbf{1.53}} & \textcolor{darkred}{\textbf{0.77}} 
          & \textcolor{darkred}{\textbf{0.58}} & \textcolor{darkred}{\textbf{0.32}} 
          & \textcolor{darkred}{\textbf{0.64}} & \textcolor{darkred}{\textbf{0.37}} 
         \\
\bottomrule
\end{tabular}
\end{table}

\begin{table*}[t]
\renewcommand{\arraystretch}{1.2}
\centering
\caption{Under two \textbf{OOD distribution} settings on two datasets, we compare the performance (\%) of \textsc{CausalDPO} with different model sizes against strong baselines. The \textcolor{darkred}{\textbf{best}} results are highlighted in \textcolor{darkred}{\textbf{red}}, and the \textcolor{blue!40}{\textbf{second-best}} results are highlighted in \textcolor{blue!40}{\textbf{blue}}.}
\label{tab: diff_llm}
\begin{tabular}{c|cccc|cccc}
\toprule
\multirow{2}{*}{\textbf{Methods}} 
& \multicolumn{4}{c|}{\textbf{Yelp2018}} 
& \multicolumn{4}{c}{\textbf{ML-10M}} \\
\cmidrule(lr){2-5} \cmidrule(lr){6-9}
& \textbf{H@10} & \textbf{N@10} & \textbf{H@20} & \textbf{N@20} 
& \textbf{H@10} & \textbf{N@10} & \textbf{H@20} & \textbf{N@20} \\
\midrule
SASRec & 1.01 & 0.43 & 1.51 & 0.61 & 0.50 & 0.02 & 0.84 & 0.30 \\
\hline
BIGRec & 0.77 & 0.50 & 1.28 & 0.62 & 0.53 & 0.29 & 0.76 & 0.35 \\
RW & 0.51 & 0.42 & 1.02 & 0.56 & 0.50 & 0.28 & 0.74 & 0.30 \\
D3 & 0.63 & 0.38 & 1.15 & 0.51 & 0.50 & 0.25 & 0.80 & 0.34 \\
\hline
\textbf{Ours} 
& \textcolor{blue!40}{\textbf{1.53}} 
& \textcolor{blue!40}{\textbf{0.77}} 
& \textcolor{blue!40}{\textbf{1.79}} 
& \textcolor{blue!40}{\textbf{0.82}} 
& \textcolor{blue!40}{\textbf{0.61}} 
& \textcolor{blue!40}{\textbf{0.35}} 
& \textcolor{blue!40}{\textbf{1.02}} 
& \textcolor{blue!40}{\textbf{0.45}} \\
-\textbf{\textit{Qwen2.5-7B-Instruct}} 
& 1.50 & 0.76 & 1.77 & 0.80 & 0.60 & 0.35 
& \textcolor{blue!40}{\textbf{1.02}} & 0.44 \\
-\textbf{\textit{Qwen3-8b-Instruct}}  
& \textcolor{darkred}{\textbf{1.54}} 
& \textcolor{darkred}{\textbf{0.80}} 
& \textcolor{darkred}{\textbf{1.81}} 
& \textcolor{darkred}{\textbf{0.85}}
& \textcolor{darkred}{\textbf{0.63}} 
& \textcolor{darkred}{\textbf{0.38}} 
& \textcolor{darkred}{\textbf{1.09}} 
& \textcolor{darkred}{\textbf{0.47}} \\
\bottomrule
\end{tabular}
\end{table*}
\subsection{Further Ablation Studies Analysis} \label{ap: further and time}
\textcolor{blue}{Performance Comparison with Other DPO Variants}. In Table \ref{tab: causal_reg_comparison}, we further compare several DPO variants originally designed for other tasks. Dr.DRO\cite{wu2024towards} integrates Distributionally Robust Optimization (DRO) theory\cite{rahimian2022frameworks} to improve DPO's robustness against noisy samples. SimPO\cite{meng2024simpo} adopts the average log-probability as an implicit reward and introduces a reward margin to better distinguish between preferred and non-preferred responses. CPO\cite{guo2024controllable} incorporates a controllability mechanism to balance multiple preference objectives (e.g., accuracy and diversity), enabling more fine-grained alignment and response optimization.

As shown in Table \ref{tab: causal_reg_comparison}, various existing DPO variants generally underperform compared to our proposed CausalDPO on out-of-distribution (OOD) datasets. The core reason for this performance gap lies in the difference in design objectives. From the outset, CausalDPO is explicitly designed to model and intervene on environment confounders in OOD recommendation scenarios, aiming to eliminate generalization errors caused by spurious correlations in preference learning. This enables the model to achieve robust generalization when facing complex distributional shifts. In contrast, other DPO variants are often tailored for specific goals, such as robustness to noise or enhanced controllability, but lack systematic modeling of environmental factors. As a result, they often struggle to maintain consistency and robustness across environments in OOD settings. 
It is also worth noting that the performance of these DPO variants varies across datasets. For example, on the MovieLens dataset, CPO achieves results close to standard DPO, while Dr.DPO and SimPO perform significantly worse. This discrepancy may stem from differences in the datasets themselves, such as noise levels, the complexity of user preference patterns, or the identifiability of latent environmental factors. These factors can lead to performance fluctuations when the models lack adequate modeling of environment-specific confounders.
Moreover, we further integrated the proposed CausalDPO framework into various DPO variants, constructing CausalDPO implementations based on different backbones. For example, SimPO w/ Causal Reg. refers to the SimPO framework enhanced with causal invariance regularization. Experimental results show that SimPO w/ Causal Reg. achieves substantial performance improvements over the original SimPO on the ML-10M dataset across multiple evaluation metrics—for instance, achieving a 28.57\% gain in NDCG@10. Similar improvements are observed across other DPO variants, demonstrating consistent and stable performance gains. These findings further validate the strong generality of our proposed causal modeling mechanism. CausalDPO serves as a modular enhancement that can be flexibly integrated into a wide range of existing DPO frameworks, effectively boosting generalization performance in out-of-distribution (OOD) recommendation scenarios regardless of the underlying architecture.

In addition, Table \ref{tab: time complixity_v2} compares the training time changes after incorporating causal regularization into various DPO variants. Taking SimPO as an example, its per-epoch training time on the Book-Crossing dataset increased from 1445 seconds (as shown in Table \eqref{tab: time complixity}) to 2280 seconds. While this represents a moderate increase, the significant performance improvement achieved in OOD generative recommendation tasks makes the additional cost justifiable.

\textcolor{blue}{\textbf{Time Complexity Analysis}}. In addition, Table \ref{tab: time complixity} compares the theoretical time complexity and empirical runtime per epoch of the original DPO, its variants, and CausalDPO. The theoretical time complexity of CausalDPO is $\mathcal{O}( \mathcal{B}\cdot L^2\cdot d + \mathcal{B}^2)$, where $B$ is the batch size, $L$ is the number of input tokens, and $d$ is the embedding dimension. The $\mathcal{B}^2$ term mainly arises from the MMD computation or the pairwise environment consistency loss. In terms of complexity, CausalDPO introduces a slight increase over the original DPO. Empirically, we compare the per-epoch fine-tuning time on the Book-Crossing dataset: CausalDPO takes 2971 seconds, while DPO takes 2482 seconds, indicating a 19.70\% increase in runtime. Nevertheless, CausalDPO achieves an average performance improvement of approximately 205.9\% over DPO on this dataset, demonstrating a favorable trade-off between performance and computational cost. The additional overhead in CausalDPO mainly comes from its causal invariance regularization: the model first performs soft clustering to infer the latent environment of each sample, then enforces distributional consistency across environments using invariant learning constraints. Additionally, since SimPO does not rely on reference model alignment, it exhibits the shortest runtime.

\textcolor{blue}{\textbf{Different Backbone Model Sizes.}}
We further evaluate \textsc{CausalDPO} with backbones of different parameter scales to examine the robustness of our method across model capacities. Specifically, we adopt \textit{Qwen2.5-7B-Instruct} and \textit{Qwen3-8B-Instruct} as the backbone models. The results are reported in Table~\ref{tab: diff_llm}. Overall, \textsc{CausalDPO} consistently benefits from stronger backbones: upgrading from 7B to 8B yields a clear and uniform improvement on both Yelp2018 and ML-10M across all metrics. Moreover, \textsc{CausalDPO} remains competitive even with the 7B backbone, achieving comparable performance to the default setting and showing stable gains over strong baselines under OOD distributions. These observations indicate that the effectiveness of \textsc{CausalDPO} does not hinge on a specific model scale; instead, it generalizes well across different backbone capacities while scaling favorably with model size, suggesting good practicality for deployment under varying compute budgets.

\begin{figure*}[t]
	\centering
	\begin{minipage}{0.23\linewidth}
	\centerline{\includegraphics[width=\textwidth]{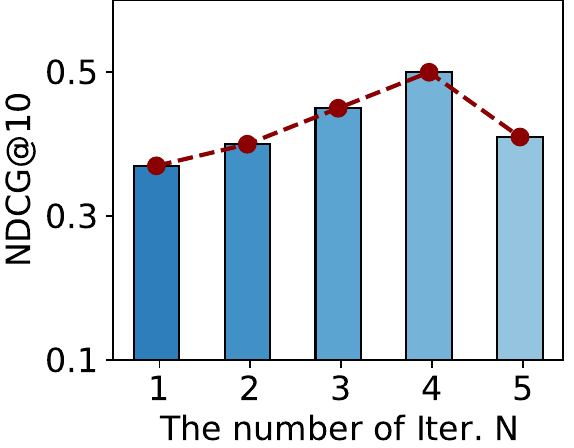}}
	\end{minipage}
	\begin{minipage}{0.23\linewidth}
	\centerline{\includegraphics[width=\textwidth]{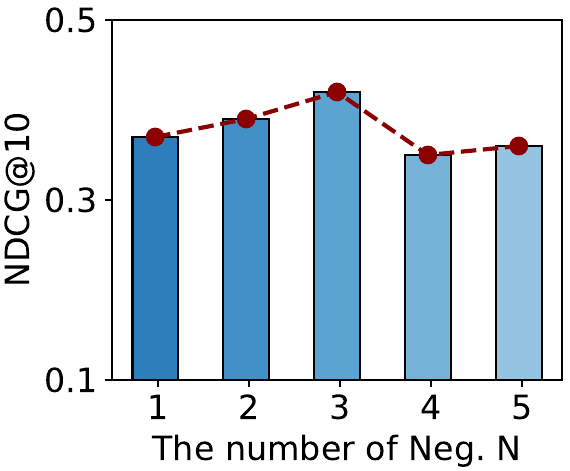}}
	\end{minipage}
    \begin{minipage}{0.23\linewidth}
	\centerline{\includegraphics[width=\textwidth]{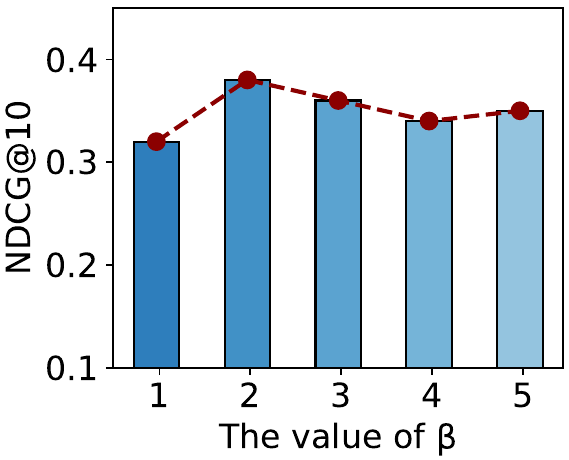}}
	\end{minipage}
    \begin{center}
     \end{center}
     \vspace{-0.4cm}
	\caption{Evaluating the impact of model hyperparameter sets  on recommendation performance.}
	\label{fig: hyper_analysis}
     \vspace{-0.5cm}
\end{figure*}

\subsection{Hyperparameter Analysis}
\label{ap: hyper}
Figure \ref{fig: hyper_analysis} presents the impact of different hyperparameter settings on the performance of CausalDPO on the Book-Crossing dataset (measured by NDCG@10), with the following key observations:
(1) The number of training iterations $N$ also plays a critical role. By adopting the iterative optimization strategy proposed in \cite{gao2024sprec}, we observe that the model achieves its best performance after four iterations, directly validating that iterative training can effectively enhance model performance.
(2) The third plot in Figure \ref{fig: hyper_analysis} shows the impact of the number of negative samples. While using multiple negative samples generally boosts performance by improving contrastive learning, an excessive number increases training time and may dilute informative learning signals.
(3) Finally, the value of $\beta$, which controls the relative weight between positive and negative pairs in the loss, also affects model performance. The best result is observed at $\beta$=2, indicating that a balanced contrastive signal is essential. Extremely low or high values may lead to suboptimal learning.

\begin{algorithm}[t]
\caption{Causal Direct Preference Optimization Fine-tuning Procedure}
\label{alg:causaldpo}
\textbf{Input:} Policy model $\pi_\theta$, reference model $\pi_\text{ref}$, dataset $\mathcal{D}$ with tuples $(x, y_w, y_l)$, batch size $\mathcal{B}$  \\
\textbf{Hyperparameters:} temperature $\beta$, regularization weight $\lambda$ \\
\textbf{Output:} Fine-tuned policy model parameters $\pi_{\theta}$
\begin{algorithmic}[1]
\FOR{each training iteration}
    \STATE Sample a mini-batch $\{(x, y_{w,i}, y_{l,i})\}_{i=1}^\mathcal{B}$ from $\mathcal{D}$
    \STATE Compute hidden states $h_i = \text{LLM}(x)$ and causal representations $z_i = g(h_i)$ for all $x$
    \STATE Apply DBSCAN to $\{z_i\}_{i=1}^B$ to obtain clusters and centers $\{c_k\}_{k=1}^K$
    \STATE Compute soft assignments $p_{ik}$, aggregated representations $\bar{x}^{(k)}$ according to Eq.~\ref{eq: softmax_ag} and Eq.~\ref{eq: weight}
    \STATE Compute DPO loss $\mathcal{L}_{\text{DPO}}$ over the batch
    \STATE Compute MMD penalty $\text{MMD}$ across pseudo-environments
    \STATE Compute total loss: $\mathcal{L}_{\text{CausalDPO}} = \mathcal{L}_{\text{DPO}} + \lambda \cdot \text{MMD}$
    \STATE Compute gradient $\nabla_\theta \mathcal{L}_{\text{total}}$ and update: $\theta \leftarrow \theta - \eta \cdot \nabla_\theta$
\ENDFOR
\end{algorithmic}
\end{algorithm}

\end{document}